\newtheorem{theorem}{Theorem}[section]
\titlespacing*{\section}{0pt}{2pt}{2pt}
\titlespacing*{\subsection}{0pt}{2pt}{2pt}
\titlespacing*{\subsubsection}{0pt}{1pt}{1pt}
\newcommand\reallywidehat[1]{%
\savestack{\tmpbox}{\stretchto{%
  \scaleto{%
    \scalerel*[\widthof{\ensuremath{#1}}]{\kern-.6pt\bigwedge\kern-.6pt}%
    {\rule[-\textheight/2]{1ex}{\textheight}}
  }{\textheight}%
}{0.5ex}}%
\stackon[1pt]{#1}{\tmpbox}%
}
\DeclareMathOperator*{\argmax}{arg\,max}
\titlespacing*{\section}{0pt}{5pt plus 1pt}{5pt minus 1pt}
\titlespacing*{\subsection}{0pt}{5pt plus 1pt}{5pt minus 1pt}
\titlespacing*{\subsubsection}{0pt}{2pt plus 1pt}{2pt minus 1pt}
  \providecommand\BibTeX{{%
    \normalfont B\kern-0.5em{\scshape i\kern-0.25em b}\kern-0.8em\TeX}}}
\begin{document}

\title{FARA: Future-aware Ranking Algorithm for Fairness Optimization}

\author{Tao Yang}

\affiliation{%
   \institution{University of Utah}
     \streetaddress{50 Central Campus Dr.}
   \city{Salt Lake City}
   \state{Utah}
   \postcode{84112}
}
\email{taoyang@cs.utah.edu}
\author{Zhichao Xu}

\affiliation{%
   \institution{University of Utah}
     \streetaddress{50 Central Campus Dr.}
   \city{Salt Lake City}
   \state{Utah}
   \postcode{84112}
}
\email{zhichao.xu@utah.edu}

\author{Zhenduo Wang}
\affiliation{%
	\institution{University of Utah}
	\streetaddress{50 Central Campus Dr.}
	\city{Salt Lake City}
	\state{Utah}
	\country{USA}
	\postcode{84112}
}
\email{zhenduow@cs.utah.edu}

\author{Qingyao Ai}
\affiliation{%
   \institution{DCST,  Tsinghua University \\ Quan Cheng Laboratory, Zhongguancun Laboratory}
   \city{Beijing}
   \state{China}
   \postcode{100084}
}
\authornote{Corresponding authors}
\email{aiqy@tsinghua.edu.cn}

\begin{abstract}
Ranking systems are the key components of modern Information Retrieval (IR) applications, such as search engines and recommender systems. Besides the ranking relevance to users, the exposure fairness to item providers has also been considered an important factor in ranking optimization. 
Many fair ranking algorithms have been proposed to jointly optimize both ranking relevance and fairness.
However, we find that most existing fair ranking methods adopt greedy algorithms that only optimize rankings for the next immediate session or request. As shown in this paper, such a myopic paradigm could limit the upper bound of ranking optimization and lead to suboptimal performance in the long term. 

To this end, we propose \textbf{FARA}, a novel \textbf{F}uture-\textbf{A}ware \textbf{R}anking \textbf{A}lgorithm for ranking relevance and fairness optimization. Instead of greedily optimizing rankings for the next immediate session, FARA plans ahead by jointly optimizing multiple ranklists together and saving them for future sessions.
Specifically, FARA first uses the Taylor expansion to investigate how future ranklists will influence the overall fairness of the system. 
Then, based on the analysis of the Taylor expansion, FARA adopts a two-phase optimization algorithm where we first solve an optimal future exposure planning problem and then construct the optimal ranklists according to the optimal future exposure planning. 
Theoretically, we show that FARA is optimal for ranking relevance and fairness joint optimization. 
Empirically, our extensive experiments on three semi-synthesized datasets show that FARA is efficient, effective, and can deliver significantly better ranking performance compared to state-of-the-art fair ranking methods.
We make our implementation public at \href{https://github.com/Taosheng-ty/QP_fairness/}{https://github.com/Taosheng-ty/QP\_fairness/}.

\end{abstract}

\keywords{Fair Ranking, Position Bias, Exposure, Exposure Fairness}



\begin{CCSXML}
<ccs2012>
   <concept>
       <concept_id>10002951.10003317.10003338.10003343</concept_id>
       <concept_desc>Information systems~Learning to rank</concept_desc>
       <concept_significance>500</concept_significance>
       </concept>
 </ccs2012>
\end{CCSXML}

\ccsdesc[500]{Information systems~Learning to rank}


\maketitle

\section{INTRODUCTION}
Ranking systems are one of the important cornerstones of information retrieval (IR). 
Existing ranking systems are usually constructed to optimize ranking relevance with the Probability Ranking Principle (PRP) \cite{robertson1977prp} where items of greater likely relevance should be ranked higher.  
The PRP is a user-centered ranking strategy that helps save users energy and time since users could satisfy their needs with the top-ranked items~\cite{joachims2017accurately}. 
However, recent research has shown that, besides users, item providers also draw utility from ranking systems, and the PRP could result in severe unfairness for item providers~\cite{singh2018fairness,biega2018equity}. Particularly, 
the PRP always assigns a few top items with high-rank positions. Those top items usually get the majority of exposure while other items rarely get exposure, although other items might still be  relevant~\cite{patro2022fair,kotary2022end,yang2023mitigating,tran2021ultra}. The unbalanced exposure leads to unfair opportunities and unfair economic gains for item providers. Such unfairness will eventually force unfairly treated providers to leave the system, and fewer options will be left for users~\cite{yang2022effective}. Therefore, IR researchers have argued that ranking relevance and fairness are both important for modern ranking systems~\cite{singh2018fairness,singh2019policy}. Many fair ranking algorithms have been proposed to optimize both of them jointly~\cite{patro2022fair,zehlike2021fairness}.


However, existing fair algorithms are mostly greedy algorithms and could only deliver suboptimal ranking performance in the long run. 
In particular, existing fair ranking algorithms~\cite{singh2018fairness,biega2018equity,morik2020controlling,oosterhuis2021computationally,wu2021tfrom} usually behave greedily to sequentially produce the locally optimal ranklist for the next immediate session without being aware of the influence of future sessions\footnote{In this paper, we define a session as a query issued by a user.} (see more discussion in \S \ref{sec:RelatedWork}). The unawareness could lead to unmitigated ranking conflict between relevance and fairness optimization. For example, imagine a case that there are in total 3 items in consideration, item \textit{A}, item \textit{B}, and item \textit{C}, where item \textit{A} is the most relevant one and item \textit{C} is the least relevant one.  Ranklist $[A,B,C]$ is the ranklist to maximize ranking relevance. We now consider a scenario where item \textit{C} is severely unfairly treated in history. To optimize exposure fairness, we need to  allocate item \textit{C} more exposure by boosting item \textit{C} to a higher position. However, If we try to greedily boost item \textit{C} within the next immediate session, it is highly likely that item \textit{C} will be boosted to the first rank to get the maximum exposure and the result ranklist is $[C,A,B]$. However, Ranklist $[C,A,B]$ is of poor ranking relevance due to the ranking conflict that the least relevant item (item \textit{C}) is put on the most important rank (the first rank).

Intuitively, the ranking conflict can be smoothed if we plan ahead and jointly optimize multiple future sessions' ranklists together instead of greedily optimizing the next immediate session.
For example, the multiple ranklists after joint optimization can be $[[A,C,B],[A,C,B],...]$, where item \textit{C} is smoothly boosted in multiple ranklists and the most relevant item, i.e., item \textit{A},  is still ranked the highest. 
Based on the above idea, we propose \textbf{FARA}, a novel \textbf{F}uture-\textbf{A}ware \textbf{R}anking \textbf{A}lgorithm for relevance and fairness optimization. 
Briefly, FARA precomputes and jointly optimizes multiple ranklists together and saves them for future use. 
Particularly, to be able to plan for the future, FARA first uses the Taylor expansion to investigate how future ranklists will influence fairness. 
Then, based on the influence, FARA uses a two-phase optimization to jointly optimizes multiple ranklists together for future use. 
In phase 1, we solve an exposure planning problem and get the optimal future planning for item exposure. 
In phase 2, we construct the optimal ranklists according to the optimal future planning for item exposure. 
We prove FARA's optimum in terms of ranking relevance and fairness joint optimization in \S~\ref{sec:theoreticalAnalysis}.
Extensive experiments on three semi-synthesized datasets also demonstrate FARA's effectiveness and efficiency compared to existing fair ranking algorithms (\S~\ref{sec:experimental_results}). 



\vspace{-5pt}
\section{RELATED WORK}
\label{sec:RelatedWork}
\vspace{-2pt}
\noindent\textbf{\textit{Ranking Fairness:}} \,
Due to the importance of rankings for providers (sellers, job seekers, content creators, etc.), ~\cite{zehlike2017fa,singh2018fairness,joachims2021fairness,zehlike2020reducing,ekstrand2023overview}, there has been growing interest in ranking fairness for providers \cite{usunier2022fast,ge2022explainable,heuss2022fairness,raj2022measuring,naghiaei2022cpfair,usunier2022fast,li2021towards,bigdeli2022gender,wu2021tfrom}.
However, the definitions of ranking fairness vary a lot in the existing literature, and there exists no universal definition. 
At a high level, existing fairness definitions can be grouped into  probability-based fairness and exposure-based fairness~\cite{patro2022fair,zehlike2021fairness}. 
Probability-based fairness~\cite{asudeh2019designing,celis2017ranking,geyik2019fairness} usually requires a minimum number or proportion of protected (e.g., race, gender) items to be distributed evenly across a ranklist. 
However, only considering the number or proportion of items in a ranklist neglects the fact that different ranks usually have different importance. 
To address this, exposure-based fairness~\cite{singh2018fairness,biega2018equity,yang2021maximizing,gao2021addressing,diaz2020evaluating} assigns values to each ranking position based on the expected user attention or click probability. 
Exposure-based fairness argues that total exposure is a limited resource for a ranking system and  advocates for fair distribution of exposure among items to ensure fairness for item providers~\cite{patro2022fair}. 
In this paper, we limit our discussion of fair ranking algorithms within the scope of exposure-based fairness.

\vspace{3pt}
\noindent\textbf{\textit{Fair Ranking Algorithms:}} \space
Recently, a few ranking algorithms ~\cite{patro2022fair,zehlike2021fairness,usunier2022fast,heuss2022fairness,mansoury2022understanding,saito2022fair,gao2022fair} have been proposed to achieve exposure-based fairness.
In this work, we classify them as  \textbf{open-loop} algorithms or \textbf{feedback-loop} algorithms depending on whether historically generated ranklists are used to correct ranking scores. 
For open-loop fair algorithms~\cite{singh2019policy,singh2018fairness,singh2021fairness,oosterhuis2021computationally,heuss2022fairness,vardasbi2022probabilistic,wu2022joint}, each item usually has a static and fixed ranking score once the ranking model is optimized.
Then ranklists are stochastically sampled for each session according to the static ranking scores. 
Various techniques have been used to optimize the static ranking model, such as linear programming~\cite{singh2018fairness,heuss2022fairness}, policy gradient~\cite{singh2019policy}, differentiable PL model optimization~\cite{oosterhuis2021computationally}. 
However, given the fact that ranking scores are static, open-loop algorithms are usually not robust. 
To improve ranking robustness, feedback-loop algorithms~\cite{yang2021maximizing,morik2020controlling,biega2018equity,yang2023marginal}
dynamically take historical ranklists as input to correct items' scores. For example, \citet{morik2020controlling} proposes to use a proportional controller to boost ranking scores of historically unfairly treated items. 



\begin{table}[t]
	\caption{A summary of notations in this paper.}
	\small
	\def\arraystretch{1}
	\begin{tabular}
		{| p{0.07\textwidth} | p{0.36\textwidth}|} \hline
		$d,q,D(q)$ & For a query $q$, $D(q)$ is the set of candidates items. $d\in D(q)$ is an item. \\\hline
		$e,r,c$ & All are binary random variables indicating whether an item $d$ is examined ($e=1$), perceived as relevant ($r=1$) and clicked ($c=1$) by a user respectively. \\\hline
		$R,P_i,E,\pi$ & $R=P(r=1|d)$,  is the probability of an item $d$ perceived as relevant. $P_i=P(e=1|rank(d|\pi)=i)$ is the examination probability of item $d$ when it is put in $i^{th}$ rank in a ranklist $\pi$. $E$ is item's accumulated examination probability (see Eq.\ref{eq:expo}).  \\\hline
		$k_s,k_c$ & Users will stop examining items lower than rank $k_s$ due to selection bias (see Eq.~\ref{eq:selectionBias}).  $k_c$ is the cutoff prefix to evaluate Cum-NDCG and $k_c\le k_s$.\\\hline
	\end{tabular}\label{tab:notation}
\vspace{-2pt}
\end{table}
\section{BACKGROUND AND PRIOR KNOWLEDGE}
\vspace{-2pt}
In this section, we provide readers with background knowledge of the paper. 
A summary of notations we use throughout the paper can be found in Table \ref{tab:notation}. 

\vspace{3pt}
\noindent \textbf{\textit{Ranking Services Workflow:}} \,
We take web search as an example to detail the ranking service workflow. 
At time step $t$, a ranking session starts when user $t$ issues a query $q$. For query $q$, there exist candidate items provided by item providers.
With the query and candidate items, the ranking system first estimates each item's relevance and then constructs a ranklist of candidate items by optimizing certain ranking objectives.
Then the ranking system presents the ranklist to users and collects users' feedback (e.g., clicks) which can be utilized to update the relevance estimator.
    

\vspace{3pt}
\noindent\textbf{\textit{Partial and Biased Feedback:}} \,
Relevance estimation is usually updated using users' feedback.
However, such feedback is usually a noisy and biased indicator of relevance since users only provide meaningful feedback for items they have examined. 
If we consider user clicks as the main signal for user feedback, then we have
\begin{equation}
\vspace{-2pt}
    c= \begin{cases}
      r, &\text{if}\quad e=1 \\
      0, &\text{otherwise}
    \end{cases}
\vspace{-2pt}
\label{eq:biasedBinary}
\end{equation}
where $e,r,c$ are binary random variables indicating whether an item is examined, perceived as relevant, and clicked, respectively.
With User Examination Hypothesis \cite{richardson2007predicting}, we can model users' click probability as
\begin{equation}
\vspace{-2pt}
    P(c=1)=P(e=1)P(r=1)
    \label{eq:clickProb}
    \vspace{-2pt}
\end{equation}
For the rest of the paper, we use $R=P(r=1)$ to simplify the notation.
Although there exist several types of biases in the examination probability $P(e=1)$, we focus on two most important ones: positional bias and selection bias.

\textbf{Positional Bias}~\cite{craswell2008experimental}: Examination probability is decided by the rank (also called position) and drops along ranks. Particularly, the examining probability is denoted as $P_{\textit{rank}(d|\pi)}$, where $\textit{rank}(d|\pi)$ is item $d$'s rank in ranklist $\pi$.

\textbf{Selection Bias} \cite{oosterhuis2021unifying,oosterhuis2020policy}: 
This bias exists when not all of the items are selected to be shown to users, or some lists are so long that no users will examine every item in them.
Assuming the items ranked lower than rank $k_s$ will not be examined \cite{oosterhuis2020policy}, we model this as:
\begin{equation}
    P(e=1|d,\pi)=\begin{cases}
      P_{\textit{rank}(d|\pi)}, & \text{if}\quad \textit{rank}(d|\pi)\le k_s \\
      0, & \text{otherwise}
    \end{cases}
    \label{eq:selectionBias}
\end{equation}

\noindent\textbf{\textit{Ranking Utility Measurement:}} \space
Here we introduce the evaluation of ranking performance from both the user side and the provider side, which in later sections will guide the ranking optimization.

\textbf{The User-side Utility}: 
User-side utility measures a ranking system's ability to put relevant items on higher ranks. 
A popular user-side utility measurement is \textit{DCG} \cite{jarvelin2002cumulated}. 
Specifically, given a query $q$ and a ranklist $\pi$, $\textit{DCG}@k_c$ is defined as,
\begin{equation}
    \textit{DCG}@k_c(\pi)=\sum_{i=1}^{k_c}R(\pi[i],q)\lambda_i=\sum_{i=1}^{k_c}R(\pi[i],q)P_i
    \label{eq:DCG}
\end{equation}
where $\pi[i]$ indicates the $i^{th}$ ranked item in ranklist $\pi$, $R(\pi[i],q)$ indicates $\pi[i]$'s relevance to query $q$, cutoff $k_c$  indicates the prefix we want to evaluate, $\lambda_i$ indicates the weight we put on $i^{th}$ rank. $\lambda_i$ is usually monotonically decreases as $i$ increases, e.g.,  $\lambda_i$ is usually set to $\frac{1}{\log_2 (i+1)}$~\cite{jarvelin2002cumulated}.
In this paper, following previous works in \cite{singh2018fairness}, 
we set $\lambda_i$ as the examination probability $P_{i}$ at the $i^{th}$ rank when computing \textit{DCG}.
Furthermore, based on \textit{DCG}, we can measure multiple ranklists with the cumulative NDCG,
\begin{equation}
\begin{split}
    \textit{eff.}@k_c=\textit{cum-NDCG}@k_c(q,t)&=\sum_{\tau=1}^t \gamma^{t-\tau}\frac{\textit{DCG}@k_c(\pi_\tau)}{\textit{DCG}@k_c(\pi^*)}
    \end{split}
    \label{eq:cumuNDCG}
\end{equation}
where $0\le\gamma\le 1$ is a constant discount factor and $t$ is the current time step. $\pi^*$ is the ideal ranklist constructed by ranking items according to true relevance, and we use $\textit{DCG}@k_c(\pi^*)$, referred to as $\textit{IDCG}$, to normalize $\textit{DCG}@k_c(\pi_\tau)$. By ignoring $\gamma$, we can get the average NDCG as,
\begin{equation}
\begin{split}
    \textit{aver-NDCG}@k_c&=\frac{\sum_{\tau=1}^t \textit{DCG}@k_c(\pi_\tau)}{t\textit{DCG}@k_c(\pi^*)}\\
    &=\frac{ \sum_{d\in D(q)} R(d)\big(\sum_{\tau=1}^{t} \sum_{j=1}^{k_c}  P_{j}\mathbbm{1}_{\pi_i[j]==d}\big)}{t\textit{DCG}@k_c(\pi^*)}\\
    &=  \frac{\sum_{d\in D(q)}R(d)  E^t@k_c(d)}{t\textit{DCG}@k_c(\pi^*)}
    \end{split}
\label{eq:NDCG}
\end{equation}
where $E^t@k_c(d)$ is the cumulative exposure at top $k_c$ ranks,
\begin{equation}
    E^t@k_c(d)=\sum_{\tau=1}^{t} \sum_{j=1}^{k_c}  P_{j}\mathbbm{1}_{\pi_i[j]==d}
    \label{eq:expo}
\end{equation}
where $\pi_i[j]$ indicates the $j^{th}$ item in ranklist $\pi_{i}$, $\mathbbm{1}$ is an indicator function which means we only accumulate item $d$'s exposure.
To simplify notations, we use $E^t(d)$ to denote $E^t@k_s(d)$ and $\textit{eff.}$ for $\textit{eff.}@k_s$, where $k_s$ is the ranklist length introduced in Eq.~\ref{eq:selectionBias}. 

\textbf{The Provider-side Utility (Fairness)}:
As items' rankings can have significant effects on their providers' profit, it is important to create a fair ranking environment. 
To evaluate whether exposure is fairly allocated to users, we use the negative exposure disparity between item pairs as fairness measurement~\cite{oosterhuis2021computationally},
\begin{equation}
\vspace{-2pt}
    \textit{unfair.}(q,t)=\frac{1}{n(n-1)}\sum_{d_x\in D(q)} \sum_{d_y\in D(q)}\bigg(E^t(d_x)R(d_y)-E^t(d_y)R(d_x)\bigg)^2
    \label{eq:unfairness}
\end{equation}
\begin{equation}
    \textit{fair.}(q,t)=-\textit{unfair.}(q,t)
    \label{eq:fairness}
\vspace{-2pt}
\end{equation}
where $D(q)$ is the set of candidate items that will construct the ranklist of query $q$ and $n=|D(q)|$. The intuition of the above fairness measurement is that the optimal fairness can be achieved when items exposure is proportional to their relevance, i.e., $\forall d_x,d_y\in D(q),\:\frac{E^t(d_x)}{R(d_x)}=\frac{E^t(d_y)}{R(d_y)}$. In other words, exposure fairness means we should let items of similar relevance get similar exposure. 
In this paper, we choose the exposure fairness evaluation proposed by \citet{oosterhuis2021computationally}  instead of the original evaluation proposed in~\cite{singh2018fairness}. The reason for the choice is that the fairness evaluation in \cite{singh2018fairness} needs to divide the exposure of an item by its
relevance, i.e., $\frac{E^t(d)}{R(d)}$, which has zero denominator problem when item $d$ is irrelevant, and $R(d)$ is near zero. This paper uses average unfairness across different queries to evaluate a ranking algorithm. We also refer to the average unfairness as the unfairness tolerance. 

\section{PROPOSED METHOD}
Most existing fair algorithms are greedy algorithms, i.e., they sequentially construct the locally optimal ranklist for the next immediate session. 
Therefore they usually fail to optimize the construction procedure if we expect multiple sessions will come for the same query in the future.
To mitigate this gap and reach a global optimal for a query, we propose to 
(i) \textbf{plan ahead and precompute multiple ranklists for future use (\S~\ref{sec:rankingObj}) }and 
(ii) \textbf{jointly optimize those ranklists together to maximize both fairness and ranking relevance (\S~\ref{sec:phase1} \& \S~\ref{sec:phase2})}. 
We hypothesize that jointly optimizing multiple ranklists can construct better ranklists compared to sequentially greedily optimizing one single ranklist at each time step.
Such hypothesis is verified by both the theoretical analysis in \S~\ref{sec:theoreticalAnalysis} and the empirical results in \S~\ref{sec:results}.

\subsection{Future-aware Ranking Objective}
\label{sec:rankingObj}
We first propose a ranking fairness objective to plan and optimize the future $\Delta T$ ranklists for a query $q$. Specifically, when we are at time step $t+1$, the objective is to pre-compute the optimal ranklists $\mathcal{B}^*=[\pi_{t+1},...,\pi_{t+\Delta T}]$ that can maximize the marginal fairness $\Delta \reallywidehat{\textit{fair.}}(q,t,t+\Delta T)$,
\begin{equation}
     \mathcal{B}^*= \argmax_{\mathcal{B}=[\pi_{t+1},...,\pi_{t+\Delta T}]} \quad \Delta\reallywidehat{\textit{fair.}}(q,t,t+\Delta T)    
\label{eq:fairmax}
\end{equation}
$\Delta \reallywidehat{\textit{fair.}}(q,t,t+\Delta T)=\reallywidehat{\textit{fair.}}(q,t+\Delta T) -\reallywidehat{\textit{fair.}}(q,t)$, and $\reallywidehat{\textit{fair.}}$ is the estimated fairness. $\reallywidehat{\textit{fair.}}$ is calculated with Eq.~\ref{eq:fairness} by substituting true relevance $R$ with the estimated relevance, denoted as $\reallywidehat{R}$, since true relevance is mostly not available during optimization. 
Here maximizing the marginal fairness $\Delta\reallywidehat{\textit{fair.}}(q,t,t+\Delta T)$ is equivalent to maximizing final fairness $\reallywidehat{\textit{fair.}}(q,t+\Delta T)$ at time step $t+\Delta T$. The reason for the equivalence is that ranklists $\mathcal{B}=[\pi_{t+1},...,\pi_{t+\Delta T}]$ can only influence the marginal fairness from timestep $t+1$ to timestep $t+\Delta T$ rather than fairness before timestep $t$. Here, fairness evaluation in Eq.~\ref{eq:fairness} is a direct objective in our method.

To the best of our knowledge, there is no trivial algorithm to get the optimal ranklists $\mathcal{B}^*$ due to the discontinuity  of ranking problem~\cite{liu2009learning}. One example of discontinuity  is that increasing an item's ranking score may not change the output ranklist, and fairness stays the same unless the increased score can surpass another item's score, and fairness will experience a sudden change.
To alleviate the discontinuity of $\mathcal{B}^* \leftarrow\operatorname*{argmax}_B \Delta\reallywidehat{\textit{fair.}}$, we propose a novel two-phase solution path by introducing a continuous variable, $\Delta E$, 
\begin{equation}
 \mathcal{B}^* \xleftarrow[phase 2]{\text{Vertical Allocation}}  \Delta E^*(d)\: \forall d\in D(q)   \xleftarrow[phase 1]{ \operatorname*{argmax}_{\Delta E}} \Delta\reallywidehat{\textit{fair.}} 
    \label{eq:SoluPath}
\end{equation}
where $\Delta E (d)$, also referred to as the planning exposure, is the marginal (or incremental) exposure we plan to assign to item $d$ within the next $\Delta T$ timesteps. $\Delta E^*(d)$ is the optimal marginal exposure. We found that introducing $\Delta E$ helps to effectively maximize $\Delta \reallywidehat{\textit{fair.}}$ in phase 1 of our solution. 
In phase 2, we  construct the optimal ranklists $\mathcal{B}^*$ by allocating the optimal exposure $\Delta E^*$ to each item with a vertical allocation method (more details are in \S\ref{sec:phase2}). 

To get $\Delta E^*$,  we carry out a Taylor series expansion to investigate how future exposure will influence the fairness objective,
\begin{equation}
\begin{split}
\Delta \reallywidehat{\textit{fair.}}&(q,t,t+\Delta T) = \ \sum_{d\in D(q)}\frac{\partial{\reallywidehat{\textit{fair.}}}}{\partial E(d)} \Delta E(d) \\&+\frac{1}{2}\sum_{d_x\in D(q)} \sum_{d_y\in D(q)}\frac{\partial^2{\reallywidehat{\textit{fair.}}}}{\partial E(d_x) \partial E(d_y)} \Delta E(d_x) \Delta E(d_y)\\
&=\sum_{d\in D(q)} \reallywidehat{G}(d) \Delta E(d)\\&-\frac{1}{2} \sum_{d_x\in D(q)} \sum_{d_y\in D(q)}\reallywidehat{H}(d_x,d_y)\Delta E(d_x) \Delta E(d_y)\\
&= \vec{\reallywidehat{G}}\cdot \Delta \vec{E}-\frac{1}{2} \Delta \vec{E}^T\cdot  \reallywidehat{H}\cdot  \Delta \vec{E}
\end{split}
\label{eq:expansion}
\end{equation}
where $\Delta E(d)=E^{t+\Delta T}(d)-E^{t}(d)$, the exposure increments. $\reallywidehat{G}$ and $\reallywidehat{H}$ are the first and the second order derivative, i.e., the gradient vector and the Hessian matrix, respectively,
\begin{equation*}
\begin{split}
\reallywidehat{G}(d)&= 
      \frac{4}{n(n-1)}\bigg(\reallywidehat{R}(d)\sum_l E^{t}(l)\reallywidehat{R}(l)-E^{t}(d)\sum_h \reallywidehat{R}^2(h)\bigg)\\
\reallywidehat{H}(d_x,d_y)&=\frac{4}{n(n-1)}\bigg((\sum_{d\in D(q)} \reallywidehat{R}^2(d))\mathbbm{1}_{x=y}-\reallywidehat{R}(d_x)\reallywidehat{R}(d_y)\bigg)
\end{split}
\end{equation*}
By observing Eq.~\ref{eq:unfairness}, we could derive two facts about the above second-order expansion in Eq.~\ref{eq:expansion}. 
(i) The above second-order expansion is not an approximation, but equality since (un)fairness in Eq.~\ref{eq:unfairness} is defined as a polynomial of $E$ with a degree of two, and its derivative of order higher than two is zero. 
(ii) Since (un)fairness in Eq.~\ref{eq:unfairness} is defined as a sum of squares, the second order derivative $\reallywidehat{H}$ is semi-definite.  
Being equality, Eq.~\ref{eq:expansion} allows us to correctly estimate future fairness given marginal exposure $\Delta E$ even when we consider a long-term future (large $\Delta T$ and $\Delta E$). 
Based on the correct future fairness estimation, it is possible to find the optimal marginal exposure planning, denoted as $\Delta E^*$, that can maximize future fairness.
Since $\reallywidehat{H}$ is semi-definite, Quadratic Programming-based (QP) optimization is valid to find $\Delta E^*$.  
We give the specific QP problem formulation to find $\Delta E^*$ in \S~\ref{sec:phase1} and leave constructing optimal ranklists $\mathcal{B}^*$ from $\Delta E^*$  in \S~\ref{sec:phase2}. 
\subsection{\textit{Phase 1}: Future Exposure Planning}
\label{sec:phase1}
When giving the QP problem formulation, we noticed that existing ranking fairness optimization usually considers two settings: (i) \textbf{the post-processing setting}~\cite{singh2019policy,singh2018fairness,biega2018equity} where relevance is assumed to be known or well estimated in advance; 
and (ii) \textbf{the online setting}~\cite{yang2021maximizing,morik2020controlling} where fairness is optimized while relevance is still being learned. 
To consider both settings, we first illustrate the QP problem formulation in the post-processing setting in \S~\ref{sec:postPro} and then extend it to work in the online setting in \S~\ref{sec:OnlinePro}. 

\subsubsection{The post-processing setting}. 
\label{sec:postPro}
To get the optimal exposure planning $\Delta E^*$, we propose the following QP formulation with $\Delta E(d)\: \forall d\in D(q)$ as decision variables,
\begin{subequations}
\begin{align}
\max_{\Delta E} \: \quad & \Delta \reallywidehat{\textit{fair.}}(q,t,t+\Delta T) \label{eq:QPobjPost}\\
\textit{s.t.}\quad  \sum_{d\in D(q)} \Delta E(d)&=\sum_{i=1}^{\Delta T} \sum_{j=1}^{k_s} P_{j} \label{eq:QPfutureExpo}\\
\sum_{d\in D(q)} \Delta E(d) \reallywidehat{R}(d)&\ge (1-\alpha) \sum_{i=1}^{\Delta T} \sum_{j=1}^{k_s} P_{j}\reallywidehat{R}(d_{\textit{order}_j})  \label{eq:NDCGconst}\\
\Delta E(d) &\ge 0, \forall d\in D(q)  \label{eq:ExpoGE0}  \\
\Delta E(d) &\le \sum_{i=1}^{\Delta T} P_{1}, \forall d\in D(q) \label{eq:ExpoGErank1}
\end{align}
\label{eq:FARAFormuPost}
\end{subequations}
where $k_s$ is the length of ranklists, $P_j$ is the examining probability at rank $j$.
Eq.\ref{eq:QPfutureExpo} indicates that the sum of items' marginal exposure should equal the sum of the $\Delta T$ ranklists' exposure.
In Eq.\ref{eq:NDCGconst}, we introduce the $\textit{NDCG}$ constraint, where $\reallywidehat{R}(d_{\textit{order}_j})$ means the $j^{th}$ largest estimated relevance.  According to Eq.(\ref{eq:DCG}\&\ref{eq:NDCG}), $\sum_{d\in D(q)} \Delta E(d) \reallywidehat{R}(d)$ indicates the $\textit{DCG}$ and $\sum_{i=1}^{\Delta T} \sum_{j=1}^{k_s} P_{j}\reallywidehat{R}(d_{\textit{order}_j})$ indicates $\textit{IDCG}$. Then, it is straightforward that ($1-\alpha$) indicates the minimum $\textit{NDCG}$ requirement we want to guarantee. In the post-processing setting, relevance is assumed to be given or already well-estimated prior to ranking optimization.
Eq.\ref{eq:ExpoGE0} indicates that an item's marginal exposure $\Delta E(d)$ should be no less than 0. In Eq.\ref{eq:ExpoGErank1},  $\Delta E(d)$ should be more than the accumulation of the first rank's exposure in the $\Delta T$ ranklists because an item should be unique in a ranklist and there are $\Delta T$ ranklists under consideration. Here we assume that the first rank's exposure is the largest and exposure drops from the top to the lower ranks.

\subsubsection{The online setting.} \,
\label{sec:OnlinePro}
In the online setting, ranklists are optimized while relevance is still being learned. How to actively explore items and get more accurate relevance for ranking optimization is critical. 
\citet{yang2022CanClicks} show that a more accurate relevance estimation for an item can be achieved by exposing an item more because more exposure leads to more interaction with users. 
Based on this, we do explorations by setting a minimum exposure requirement for items and propose the following QP formulation,
\begin{subequations}
\begin{align}
\max_{\Delta E} \: \quad \Delta \reallywidehat{\textit{fair.}}(q,t,t+\Delta T)&- \beta\sum_{d\in D(q)} s(d) \label{eq:QPobjOn}\\
\textit{s.t.} \quad \quad   Eq.~& (\ref{eq:QPfutureExpo},\ref{eq:NDCGconst},\ref{eq:ExpoGE0},\ref{eq:ExpoGErank1})\\
 s(d)&\ge 0, \forall d\in D(q)  \label{eq:sGe0} \\
s(d)+\Delta E(d) +E^{t}(d)&\ge E_{min}, \forall d\in D(q)  \label{eq:sGeRequired}
\end{align}
\label{eq:FARAFormuOnline}
\end{subequations}
where $\beta$ indicates the importance of exploration,  $s(d)\: \forall d\in D(q)$ are slack variables  to encourage exploration. In other words, both $s(d)$ and $\Delta E(d)\: \forall d\in D(q)$ are decision variables in the setting.
In Eq.~\ref{eq:sGeRequired}, $E_{min}$ is the minimum exposure requirement, $E^{t}(d)$ is item $d$'s exposure accumulated till time step $t$, and $\Delta E(d)$ is the marginal exposure we plan to allocate to item $d$ within the next $\Delta T$ steps.  $s(d)$ can be interpreted as the additional exposure still needed to satisfy the minimum exposure requirement after the next $\Delta T$ steps. 
When $E^{t}(d)\ge E_{min}$ for item $d$, i.e., minimum exposure requirement is already satisfied for item $d$, it is straightforward that $s(d)$ will be 0 and will not contribute to the ranking objective in Eq.~\ref{eq:QPobjOn}. 
When $E^{t}(d)< E_{min}$, i.e. item $d$ does not meet the minimum exposure requirement, the objective in Eq.~\ref{eq:QPobjOn} will try to minimize $s(d)$. 
In other words, $\Delta{E}(d)$ will be boosted in order to satisfy the constraint in Eq.~\ref{eq:sGeRequired}. With more exposure, item $d$ will be explored more.
In this paper, we refer to the introduction of $s$ as \textbf{Exploration}. And we treat $\beta$ in Eq.~\ref{eq:QPobjOn} and $E_{min}$ as hyper-parameters to control the degree of exploration.

As quadratic programming has been well studied, there are many available existing solvers. In this paper, we use quadratic programming library qpsolvers\footnote{\url{https://pypi.org/project/qpsolvers/}} within python to solve Equation \ref{eq:FARAFormuPost} and Equation \ref{eq:FARAFormuOnline} to get the optimal exposure planning $\Delta E^*$.

\subsection{\textit{Phase 2}: Ranklists Construction}
\label{sec:phase2}
Following the solution path in Eq.~\ref{eq:SoluPath}, the next step is to  construct the optimal ranklists $\mathcal{B}^*$ according to $\Delta E^*$ as  $\Delta E^*$ has been solved in Phase 1. Here, we should allocate each item exactly its optimal exposure $\Delta E^*$ within $\mathcal{B}^*$. However, we find that the allocation solution of $\mathcal{B}$ is not unique, and they share the same aver-NDCG$@k_s$ (see  Theorem \ref{sec:fixedEffFair}). Therefore, we additionally aim to find the optimal $\mathcal{B}^*$ that can optimize \textbf{all top ranks' effectiveness}, i.e., aver-NDCG$@k_c$, $\forall k_c\le k_s$. 
Optimizing top ranks' effectiveness is important since users usually pay more attention to top ranks. 


\begin{figure}
    \centering
    \includegraphics[scale=0.5]{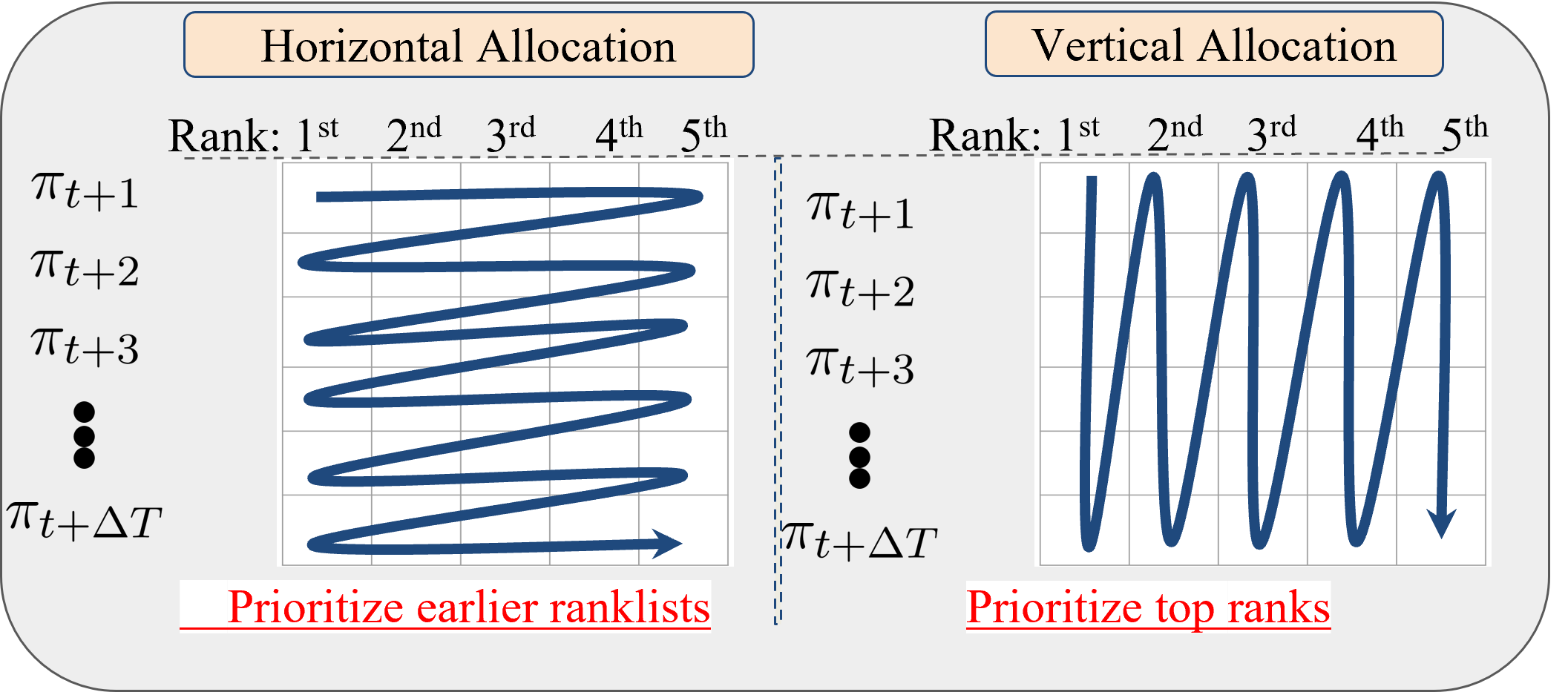}
    \caption{The ranklist construction order of the horizontal allocation and vertical allocation.}
    \label{fig:VHcompare}
\end{figure}

\begin{algorithm}[t]
\textbf{Input}: 
The optimal exposure planning $\Delta E$, the number of planning sessions to consider $\Delta T$, the ranked list length $k_s$, relevance estimation $\reallywidehat{R}$\;
\textbf{Initialize:} ranking lists $\mathcal{B}^*(i)\shortleftarrow$ $[\varnothing]$  for $i \in range (\Delta T)$,
set $\Tilde{E}(d)\shortleftarrow 0\quad \forall d\in D$ \;
\For{$rnk\in [1,2,...,k_s]$}{\label{algoline:loop1}
    \For{$sess \in [1,2,...,\Delta T]$ }{\label{algoline:loop2}
        $Set1 \shortleftarrow \{d|\Delta E(d)-\Tilde{E}(d)\geq P_{rnk}$\}\; \label{algo2:margin}
        $Set2 \shortleftarrow \{d| d \not\in \mathcal{B}^*(sess))$\}\;
        \uIf{$Set1 \cap Set2 =  \varnothing
        $}{$Candidates\shortleftarrow  Set2$}
        \Else{$Candidates\shortleftarrow Set1 \cap Set2$}
        
       $ d^*\shortleftarrow\underset{d\in Candidates}{\arg\!\max} {\reallywidehat{R}(d)}$\;\label{algoline:priority}
       $\mathcal{B}^*(sess).append(d^*)$\;
       $\Tilde{E}(d^*)\shortleftarrow \Tilde{E}(d^*) +P_{rnk}$\;}
            }
\textbf{Output}: $\mathcal{B}^*$;  
\caption{Vertical Allocation}
\label{algo:VerticalAlloc}
\end{algorithm}
Inspired by~\cite{yang2022effective},  we propose a vertical exposure allocation method in Algorithm \ref{algo:VerticalAlloc} to construct the optimal $\mathcal{B}^*$ based on $\Delta E^*$. The difference between a vertical allocation and a horizontal allocation is the ranklist construction order. As shown in Fig.~\ref{fig:VHcompare}, a horizontal allocation prioritizes earlier ranklists and first fills out all ranks of the $i^{th}$ ranklist $\pi_i$ before filling out  $\pi_{i+1}$.  However, a vertical allocation prioritizes top ranks  and fills out the $i^{th}$ ranked items of all ranklists before filling out any $(i+1)^{th}$ ranked item.  
Since top ranks are usually more important, our proposed Algorithm \ref{algo:VerticalAlloc} adopts a vertical allocation to fill out $\mathcal{B}^*$. In our proposed Algorithm \ref{algo:VerticalAlloc}, $\mathcal{B}^*$ is the generated $\Delta T$ ranklists and  $\mathcal{B}^*(sess,rnk)$ denote the ${rnk}^{th}$ rank of the ${sess}^{th}$ ranklist. 
To fill out $\mathcal{B}^*(sess,rnk)$, the proposed vertical allocation first generates a feasible candidate set, i.e., $Candidates$, which contains items that are not selected for this session before ($d \not\in B(sess)$) but still have planned exposure left ($\Delta E(d)-\Tilde{E}(d)\geq P_{rnk}$). 
Here, examination probability $P_{rnk}$ serves as the margin,  $\Tilde{E}(d)$ stores the actual exposure item $d$ receives. Within $Candidates$, our algorithm selects the most relevant item from the candidate set to fill out $\mathcal{B}^*(sess,rnk)$. Algorithm \ref{algo:VerticalAlloc} is theoretically justified to accurately allocate exposure $\Delta E^*$ (see  Theorem \ref{sec:ExposureError}) and can construct optimal $\mathcal{B}^*$ for aver-NDCG$@k_c$  $\forall k_c\le k_s$ (see  Theorem\S~\ref{sec:VerticalReason}). 

Although inspired by the vertical method in \cite{yang2022effective}, the proposed vertical allocation is different from it. \citet{yang2022effective} focus on a certain share of exposure to be guaranteed and have a complicated 3-step procedure, i.e., allocation, appending, and resorting, which cannot be used to allocate $\Delta E^*$ for our problem.  
The proposed allocation algorithm in this paper uses up all $\Delta E^*$, and the allocation procedure is less complicated and more straightforward than those introduced by \citet{yang2022effective}. 

\begin{algorithm}[t]
\caption{FARA: Future-aware Ranking Algorithm}\label{algo:FARA}
 \textbf{Input}:  
   The number of planning sessions to consider $\Delta T$, fairness-effectiveness tradeoff parameter $\alpha$. And in the online setting, we need to additionally give the exploration parameters $\beta$ and $E_{min}$ (see Eq.~\ref{eq:FARAFormuOnline}) \;
  \textbf{Initialize} time step $t \gets 0$, initialize an empty dictionary $\mathbbm{B}=\{\}$ to store ranked lists, items' exposure $E\gets 0$ and cumulative click $\textit{cumC} \gets 0$\;
\While{True}{
   $t \gets t+1$\;
  A user issues a query $q_t$\;
\If{$q_t \not\in \mathbbm{B}$}{$\mathbbm{B}[q_t]=[]$, i.e., add an empty list}\
   \If{$\mathbbm{B}[q_t]$ is empty}{
   Get $\Delta E^*$ by solving Eq.~\ref{eq:FARAFormuPost} (post-processing) or Eq.~\ref{eq:FARAFormuOnline} (online)\;
   Get $\mathcal{B}^*$ with Algorithm~\ref{algo:VerticalAlloc}\;
   Randomly shuffle $\mathcal{B}$\;
   $\mathbbm{B}[q_t] \gets \mathcal{B}^*$;}
   
   Pop out a ranking list from $\mathbbm{B}[q_t]$ and present it\;
   Update cumulative click $cumC$ and  items' exposure $E$\;
   Update the relevance estimation \reallywidehat{R} via Eq.~\ref{eq:releEsti}\;
}
\end{algorithm}
\subsection{FARA: Future-Aware Ranking Algorithm}
Combining \textit{Phase 1} and \textit{Phase 2}, we propose a future-aware ranking algorithm for fairness optimization, \textbf{FARA}, detailed in Algorithm~\ref{algo:FARA}. 
FARA serves users in an online manner where we pre-compute $\mathcal{B}^*$, the next $\Delta T$ ranklists, and randomly pop out one ranklist from $\mathcal{B}^*$ when needed. 
When $\mathcal{B}^*$ is used up and empty, We will re-compute $\mathcal{B}^*$ for future $\Delta T$ timesteps. $\mathbbm{B}[q]$ is used to store $\mathcal{B}^*$ for query $q$.
Besides, FARA does not depend on any specific relevance estimation model, therefore, can
be seamlessly integrated into most existing ranking applications.
In this paper, we follow works by~\cite{yang2022CanClicks} to use the following unbiased estimator of relevance,
\begin{equation}
    \reallywidehat{R}(d)=\frac{\textit{cumC}^t(d)}{E^t(d)}
\label{eq:releEsti}
\end{equation}
where $\textit{cumC}^t(d)= \sum_{i=1}^t \sum_{j=1}^{k_s} C_{i,j}\mathbbm{1}_{\pi_i[j]==d} $ is the cumulative clicks\footnote{Given the page limit, we skip the proof of unbiasedness for the above estimator and refer interested readers to \cite{yang2022CanClicks}}.
Moreover, it is worth noting that the relevance estimator can be replaced with other relevance estimators as well.


\section{Theoretical Analysis}
\label{sec:theoreticalAnalysis}
\begin{theorem}
    Algorithm \ref{algo:VerticalAlloc} can theoretically guarantee $\Delta E(d)-\Tilde{E}(d)\le P_{k_s}$ for at least $|D|-k_s$ items.
\label{sec:ExposureError}
\vspace{-5pt}
\end{theorem}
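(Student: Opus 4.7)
The plan is to proceed by contradiction. Suppose the leftover set $L := \{d \in D(q) : \Delta E(d) - \Tilde{E}(d) > P_{k_s}\}$ at the termination of Algorithm~\ref{algo:VerticalAlloc} satisfies $|L| \geq k_s + 1$, and derive a contradiction from the algorithm's selection logic.

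The first ingredient is a monotonicity observation. Every inner-loop step of the algorithm modifies $\Tilde{E}$ only by adding a positive $P_{rnk}$ to one coordinate, so $\Tilde{E}(d)$ is non-decreasing and the remaining $\Delta E(d) - \Tilde{E}(d)$ is non-increasing over the run. Hence for every $d \in L$ the remaining stays strictly above $P_{k_s}$ at every instant of the outer iteration $rnk = k_s$, which means $d$ belongs to $Set1$ at every inner-loop step of this pass.

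The second ingredient is a session-uniqueness count during the rank-$k_s$ pass. At the start of the step for session $sess$, the partial list $\mathcal{B}^*(sess)$ already contains $k_s - 1$ items, so at most $k_s - 1$ items of $L$ can lie in $\mathcal{B}^*(sess)$. Under the assumption $|L| \geq k_s + 1$, this forces $|L \setminus \mathcal{B}^*(sess)| \geq 2$, so $L \cap Set2 \neq \emptyset$ and, combining with the monotonicity observation above, $Set1 \cap Set2 \neq \emptyset$. Consequently the algorithm always enters the case-(a) branch $Candidates \leftarrow Set1 \cap Set2$ at rank $k_s$, and each of the $\Delta T$ selections consumes exactly $P_{k_s}$ units of remaining from some item that had remaining $\geq P_{k_s}$.

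The main obstacle---and the piece I expect to require the most care---is converting these structural facts into the precise numerical bound $|L| \leq k_s$. My plan is a pigeonhole/matching argument against the $\Delta T$ picks made at rank $k_s$: the greedy-by-relevance rule, together with the session uniqueness constraint $d \notin \mathcal{B}^*(sess)$, prevents any single item from absorbing all $\Delta T$ picks that would otherwise clear the remaining budget of items sitting in $Set1 \cap Set2$. The goal is to exhibit at least $|L| - k_s$ selections at rank $k_s$ that land on items of $L$, each of which then drains that item's remaining by $P_{k_s}$ and thus---combined with the defining inequality of $L$---forces its preceding remaining to have exceeded $2P_{k_s}$; tracking these events through the inner loop and coupling them with the fact that items blocked by $d \in \mathcal{B}^*(sess)$ must have been picked at some rank $rnk < k_s$ yields the contradiction. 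Completing this careful counting---especially handling items of $L$ picked multiple times during the rank-$k_s$ pass and sessions where several $L$-items are simultaneously blocked---is the delicate part I would work out in full in the formal write-up.
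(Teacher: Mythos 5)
Your structural observations are sound and partly parallel the paper's: the remaining budget $\Delta E(d)-\Tilde{E}(d)$ is non-increasing, every item of $L$ therefore stays in $Set1$ throughout the rank-$k_s$ pass, and under the hypothesis $|L|\ge k_s+1$ at most $k_s-1$ items of $L$ can be blocked by $\mathcal{B}^*(sess)$, so $Set1\cap Set2\neq\varnothing$ at every rank-$k_s$ step. But the step you defer to the "formal write-up" is the step that carries the entire proof, and as sketched it does not close. Nothing in Algorithm \ref{algo:VerticalAlloc} forces any rank-$k_s$ selection to land on an item of $L$: candidates are chosen by $\arg\max_d \reallywidehat{R}(d)$, so if the items of $L$ happen to be the least relevant they can sit inside $Set1\cap Set2$ for the whole pass and never be picked, and your claim that at least $|L|-k_s$ selections land in $L$ has no support. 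The subsequent inference that a drained $L$-item "must have had remaining exceeding $2P_{k_s}$" also does not lead anywhere that contradicts $|L|\ge k_s+1$.

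The ingredient your outline is missing is the conservation identity the paper uses: the QP constraint fixes $\sum_{d}\Delta E(d)=\sum_{i=1}^{\Delta T}\sum_{j=1}^{k_s}P_j=\sum_{d}\Tilde{E}(d)$, so the signed remainders sum to zero. The paper's proof is a dichotomy on whether the else-branch ($Candidates\leftarrow Set2$) ever fires. If it never fires, every pick respects the margin, so $\Delta E(d)\ge\Tilde{E}(d)$ for all $d$, and conservation then forces exact allocation, i.e., $L=\varnothing$. If it fires at some $(rnk^*,sess^*)$, the paper propagates the emptiness of $Set1\cap Set2$ to $(k_s,sess^*)$ and reads off at least $|D|-k_s$ items outside $Set1$ there, hence with remaining below $P_{k_s}$, which persists by monotonicity. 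Your contradiction hypothesis could mesh with this dichotomy (emptiness at rank $k_s$ would contradict the non-emptiness you derived), but only if you also supply the propagation step from $rnk^*$ to $k_s$ --- which needs care, since the threshold $P_{rnk}$ drops with rank and $Set1$ can in principle regain members --- and handle the never-fires case via conservation. Neither appears in your outline, so as written the proof is incomplete; I would restructure it around the else-branch dichotomy and the conservation identity rather than around counting rank-$k_s$ picks.
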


\begin{proof}
Here we discuss the exposure allocation error bounds in Phase 2 of FARA, i.e., $|\Tilde{E}(d)-\Delta E(d)|$, in Algorithm~\ref{algo:VerticalAlloc}. 
We noticed that there are two possible scenarios of lines 6-12 in Algorithm~\ref{algo:VerticalAlloc}: 

\textit{Scenario 1: There exists a $(rnk^*, sess^*)$ pair where $Set1 \cap Set2 =  \varnothing$}. If this happens, $(k_s, sess^*)$ will also have $Set1 \cap Set2 =  \varnothing$ since the size of $Set1$ and $Set2$ monotonically decrease  for lower rank of the same session. 
As $Set2$ is the set of unselected items for a session, we know that there are at least $|D|-k_s$ items in $Set2$, i.e., $|Set2|\ge |D|-k_s$.  If $Set1 \cap Set2 =  \varnothing$, those $|D|-k_s$ items are not in $Set1$.  In other words, there are at least $|D|-k_s$ items that satisfy $\Delta E(d)-\Tilde{E}(d)< P_{k_s}$.


\textit{Scenario 2: $Set1 \cap Set2 \neq \varnothing \: \forall (rnk, sess) $ pair}. Due to the margin $P_{rnk}$ in line~\ref{algo2:margin} of Algorithm~\ref{algo:VerticalAlloc}, $\Delta E(d)\ge \Tilde{E}(d) \: \forall d \in D$ should always hold if $Set1 \cap Set2 = \varnothing$ never happens.  By considering $\Delta E(d)\ge \Tilde{E}(d) \: \forall d \in D$ and the identity $\sum_{d\in D(q)}\Delta E(d)\equiv\sum_{d\in D(q)} \Tilde{E}(d)$, we would know that $\Delta E(d)= \Tilde{E}(d) \: \forall d \in D$, which means exposures are perfectly allocated according to $\Delta E(d)$.

Combing the two scenarios, the vertical allocation in Algorithm \ref{algo:VerticalAlloc} can theoretically guarantee $\Delta E(d)-\Tilde{E}(d)\le P_{k_s}$ for at least $|D|-k_s$ items. 
Since $\Delta E(d)>> P_{k_s}$ and $|D|>>k_s$ when using FARA, we can claim that Algorithm \ref{algo:VerticalAlloc} correctly allocates exposure.
\end{proof}

\begin{theorem}
\vspace{-5pt}
FARA can reach the optimal NDCG with the given exposure planning.
\label{sec:VerticalReason}
\vspace{-5pt}
\end{theorem}
\begin{proof}
Here we provide theoretical proof that vertical allocation, i.e., phase 2, can optimize effectiveness (\textit{aver-NDCG}) when exposure planning $\Delta E$ is given. 
Specifically, maximizing \textit{aver-NDCG}$@k_c$ in Eq.~\ref{eq:NDCG} is equivalent to
\begin{subequations}
\begin{align}
\vspace{-15pt}
\max \sum_{d\in D(q)}R(d)&  E@k_c(d) \label{eq:ignoreNorm} \\
\textit{s.t.}\quad \sum_{d\in D(q)}  E@k_c(d)&=Const.\label{eq:ExpoSumConstkc} \\
0\le E@k_c(d)&\le \Delta E(d)\; \: \forall d\in D \label{eq:LessThanExpoPlanning}
\vspace{-5pt}
\end{align}
\label{eq:proofOptim}
\end{subequations}
where normalization is ignored in Eq.~\ref{eq:ignoreNorm}, the sum of top ranks exposure should be a constant in Eq.~\ref{eq:ExpoSumConstkc}, and the top ranks exposure should be less than the total exposure planning in Eq.~\ref{eq:LessThanExpoPlanning}. According to Rearrangement Inequality~\cite{hardy1952inequalities}, it is straightforward to know that \textit{aver-NDCG}$@k_c$, i.e., Eq.~\ref{eq:ignoreNorm},  can be optimized by letting item of greater relevance $R$ get more  exposure at top ranks, i.e., greater $E@k_c$. In other words, we should prioritize letting items of greater relevance $R$ fulfill their exposure planning $\Delta E$ at top $k_c$ ranks since $E@k_c$ is bounded in $[0,\Delta E]$. 
By assuming that \textit{aver-NDCG} at higher ranks is more important~\cite{robertson1977prp}, we should maximize \textit{aver-NDCG}$@k_c$ before maximizing \textit{aver-NDCG}$@(k_c+1)$, $\: \forall\: 1\le k_c<k_s$.
As we maximize \textit{aver-NDCG} from top to lower ranks, it is straightforward that the optimal way is to follow a greedy selection strategy to let an item of greater relevance $R$ fulfill its exposure planning $\Delta E$ at its highest possible ranks. In Algorithm \ref{algo:VerticalAlloc}, the proposed vertical allocation exactly follows the above greedy selection strategy to let item of greater relevance $R$ (line ~\ref{algoline:priority} in Algorithm \ref{algo:VerticalAlloc}) fulfill its exposure planning $\Delta E$ at the highest possible ranks (setting rank loops as the outer loop in ~\ref{algoline:loop1} and line ~\ref{algoline:loop2} of Algorithm \ref{algo:VerticalAlloc}).  So it can reach optimal effectiveness at the top ranks.  
\end{proof}
\begin{theorem}
\vspace{-5pt}
Effectiveness and fairness are fixed when $\Delta E$ is fixed.
\label{sec:fixedEffFair}
\vspace{-5pt}
\end{theorem}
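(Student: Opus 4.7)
The plan is to observe that both fairness (Eq.~\ref{eq:fairness}) and effectiveness \textit{aver-NDCG}$@k_s$ (Eq.~\ref{eq:NDCG}) are, at the evaluation horizon $t+\Delta T$, closed-form functions only of the cumulative exposure vector $\{E^{t+\Delta T}(d)\}_{d\in D(q)}$ and the (already-fixed) relevance vector $\{R(d)\}_{d\in D(q)}$. Since $E^{t+\Delta T}(d)=E^t(d)+\Delta E(d)$ and the history $E^t$ is realized at the planning moment, fixing the vector $\Delta E$ fixes the entire exposure vector, hence fixes both outputs regardless of which particular sequence $\mathcal{B}=[\pi_{t+1},\dots,\pi_{t+\Delta T}]$ realizes $\Delta E$.

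First I would expand fairness from Eq.~\ref{eq:unfairness}:
\[
\textit{unfair.}(q,t+\Delta T)=\frac{1}{n(n-1)}\sum_{d_x,d_y\in D(q)}\bigl(E^{t+\Delta T}(d_x)R(d_y)-E^{t+\Delta T}(d_y)R(d_x)\bigr)^2,
\]
which depends on $\mathcal{B}$ only through $\{E^{t+\Delta T}(d)\}$. Substituting $E^{t+\Delta T}(d)=E^t(d)+\Delta E(d)$ makes this a function of $\Delta E$ alone (with $E^t$ and $R$ treated as constants), so fairness is pinned down.

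Next, for effectiveness I would apply the last line of Eq.~\ref{eq:NDCG}:
\[
\textit{aver-NDCG}@k_s=\frac{\sum_{d\in D(q)}R(d)\,E^{t+\Delta T}@k_s(d)}{(t+\Delta T)\,\textit{DCG}@k_s(\pi^*)}.
\]
Using the convention $E^t(d)\equiv E^t@k_s(d)$ adopted in Section~3, we have $E^{t+\Delta T}@k_s(d)=E^t(d)+\Delta E(d)$, so the numerator, and hence the whole ratio (the denominator depends only on $\pi^*$ and on the horizon $t+\Delta T$), is a function of $\Delta E$ alone.

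The main point to emphasize, rather than any real technical obstacle, is the restriction to the full-length prefix $k_s$: \textit{aver-NDCG}$@k_c$ for $k_c<k_s$ is \emph{not} determined by $\Delta E$, because $E^{t+\Delta T}@k_c(d)$ depends on which specific rank an item occupies within each $\pi_\tau$ and not only on its total exposure-at-$k_s$ allocation. This residual degree of freedom is exactly what Phase~2 (Algorithm~\ref{algo:VerticalAlloc}) resolves, and is consistent with Theorem~\ref{sec:VerticalReason}. Thus the proof reduces to substituting the decomposition $E^{t+\Delta T}=E^t+\Delta E$ into the two closed-form expressions, with this $k_s$-vs-$k_c$ caveat flagged explicitly so the reader understands why Phase~2 is still needed after Phase~1 has produced $\Delta E^*$.
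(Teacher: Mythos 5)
Your proposal is correct and takes essentially the same route as the paper's own proof: both arguments reduce to observing that fairness (Eq.~\ref{eq:unfairness}) and \textit{aver-NDCG}$@k_s$ (Eq.~\ref{eq:NDCG}) depend on the ranklists only through the exposure vector, which is pinned down once $\Delta E$ is fixed. Your version is somewhat more careful than the paper's two-sentence sketch — in particular the explicit decomposition $E^{t+\Delta T}=E^t+\Delta E$ and the caveat that the claim holds only for the full prefix $k_s$ and not for $k_c<k_s$ — but these are refinements of the same substitution argument, not a different proof.
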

\begin{proof}
\vspace{-10pt}
Given the same exposure planning $\Delta E$, effectiveness (aver-NDCG$@k_s$) and fairness are fixed since we can substitute exposure planning $\Delta E(d)$  for $E^t@k_c(d)$ in Eq.~\ref{eq:NDCG} and substituting $\Delta E$  for $E$ in Eq.~\ref{eq:unfairness}, respectively. In other words, for any ranklist $\mathcal{B}^*$, as long as exposure planning $\Delta E$ can be accurately allocated in $\mathcal{B}^*$, the effectiveness and fairness are fixed.
\end{proof}
\vspace{-8pt}
\begin{table}[t]
	\centering
	\vspace{-5pt}
	\caption{Datasets statistics. For each dataset, the table below shows the number of queries, the average number of docs for each query, and the relevance annotation $y$'s range.}
	\scalebox{1.0}{
		\begin{tabular}{c c c c} \toprule
			Datasets & \#Queries & \#Aver. Docs per Query &$y$'s range \\ \midrule
			MQ2008 & 800  & 20 &$0-2$\\ 
			MSLR-10k& 10k & 122 &$0-4$\\ 
			Istella-S & 33k & 103&$0-4$\\   
			\bottomrule
	\end{tabular}}
	\label{tab:data_statistics}
\end{table}

\section{EXPERIMENTS}
\label{sec:experimental_results}
\vspace{-5pt}
\subsection{Experimental setup}
\vspace{-3pt}
\noindent
\textbf{\textit{Datasets:}} \space
In this work, we use three public Learning-to-Rank (LTR) datasets: MQ2008~\cite{qin2013introducing}, MSLR10k\footnote{\url{https://www.microsoft.com/en-us/research/project/mslr/}} and Istella-S~\cite{lucchese2016post}. 
Datasets' statistics are shown in Table~\ref{tab:data_statistics}.
MQ2008 has a three-level relevance judgment (from 0 to 2). MSLR10k and Istella-S have a five-level relevance judgment (from 0 to 4). 
Queries in each dataset are already divided into training, validation, and test partitions according to a 60\%-20\%-20\% scheme. In this work, we mainly focus on comparison within the LTR tasks. However, the proposed method can be adapted to recommendation tasks, which we leave for future studies.

\vspace{2pt}
\noindent
\textbf{\textit{Baselines:}} \space
\label{sec:baselines}
In this paper, we compare the following methods:\\
\textbullet \, \textbf{TopK}: Sort items according to $\reallywidehat{R}(d)$\\
\textbullet \, \textbf{RandomK}: Randomly rank items.\\
\textbullet \, \textbf{FairCo}~\cite{morik2020controlling}: Fair ranking algorithm based on a proportional controller. $\alpha \in [0.0,1000.0]$ \\
\textbullet \, \textbf{MCFair}~\cite{yang2023marginal}: Fair ranking algorithm directly uses gradient as the ranking score. $\alpha \in [0.0,1000.0]$ \\
\textbullet \, \textbf{ILP}~\cite{biega2018equity}: Fair ranking algorithm based on Integer Linear Programming (ILP).$\alpha \in [0.0,1.0]$ \\
\textbullet \, \textbf{LP}~\cite{singh2018fairness}: Fair ranking  algorithm based on Linear Programming (LP).$\alpha \in [0.0,1000.0]$ \\
\textbullet \, \textbf{MMF}~\cite{yang2021maximizing}. Similar to FairCo but focus on top ranks fairness. $\alpha \in [0.0,1.0]$ \\
\textbullet \, \textbf{PLFair}~\cite{oosterhuis2021computationally}. A fair ranking  algorithm based on Placket-Luce optimization. $\alpha \in [0.0,1.0]$ \\
\textbullet \, \textbf{FARA-Horiz.} (ours): A variant of FARA. Compared to FARA, we switch line \ref{algoline:loop1} and line \ref{algoline:loop2} in Algorithm~\ref{algo:VerticalAlloc} to first iterate the sessions and then iterate the ranks. We refer to the iterations as the \textbf{horizontal allocation} paradigm. $\alpha \in [0.0,1.0]$ \\
\textbullet \, \textbf{FARA} (ours). The proposed fair ranking algorithm. $\alpha \in [0.0,1.0]$.

\vspace{2pt}
\noindent
Among the above ranking algorithm, TopK and RandomK are unfair algorithms, while the others are fair algorithms. While all the fair ranking algorithms aim to maximize effectiveness and fairness, FARA and FARA-Horiz. differ from others by taking a joint optimization across multiple ranklists rather than a traditional greedy optimization approach.
For fair ranking algorithms, there exists a tradeoff parameter $\alpha$, similar to $\alpha$ in Eq.~\ref{eq:FARAFormuPost},  to balance effectiveness and fairness. For fair algorithms, the greater $\alpha$ is, the more we care about fairness
while potentially sacrificing more effectiveness. For example, when increasing $\alpha$ in Eq. \ref{eq:NDCGconst}, FARA can maximize fairness with less effectiveness constraint.
For different fair algorithms, $\alpha$ lies in different ranges.  
For FairCo, MCFair, LP, $\alpha$ are originally within $[0.0,+\infty]$, and we adopt $\alpha\in [0.0,1000.0]$ which is enough according to our experiments. 
For ILP, MMF, PLFair, FARA-Horiz. and FARA, $\alpha\in [0.0,1.0]$.
Although the vertical allocation in Algorithm~\ref{algo:VerticalAlloc} was inspired by \cite{yang2022effective}, \cite{yang2022effective} cannot be used as a baseline because 
\cite{yang2022effective} works with offline ranking services where all user queries are known in advance. However, in this paper, we consider the online services depicted in Algorithm~\ref{algo:FARA}. 

\vspace{3pt}
\noindent
\textbf{\textit{Ranking Service Simulation:}} \space
\label{sec:Simulation}
Following the workflow in Algorithm~\ref{algo:FARA}, at each time step, a simulated user will issue a query $q$, which is randomly sampled from the training, validation, or test partition. Corresponding to the query $q$, a ranking algorithm will construct a ranklist $\pi$ of candidate items and present it to the simulated user. To collect users' feedback for the  ranked list $\pi$, we need to simulate relevance and examination (see Eq.\ref{eq:clickProb}). Same as~\cite{ai2018unbiased}, the relevance probabilities of each document-query pair $(d,q)$ are simulated with their
relevance judgement $y$ as  
$$
\vspace{-5pt}
P(r=1|d,q)=\epsilon+(1-\epsilon)\frac{2^y-1}{2^{y_{max}}-1}
\vspace{-2pt}
$$ 
where $y_{max}$ is the maximum value of relevance judgement $y$, i.e., 2 or 4 depending on the datasets. Besides relevance, following~\cite{morik2020controlling,oosterhuis2021unifying}, we simulate users' examination probability as,
\begin{equation*}
\vspace{-5pt}
    P(e=1|d,\pi)=\begin{cases}
      \frac{1}{\log_2(\textit{rank}(d|\pi)+1)}, & \text{if}\quad \textit{rank}(d|\pi)\le k_s \\
      0, & \text{otherwise}
    \end{cases}
\vspace{-2pt}
\end{equation*}
For simplicity, we only simulate users' examination behavior on top ranks, and we set $k_s$ to 5 throughout the experiments (refer to Eq.~\ref{eq:selectionBias} for more details of $k_s$). With $P(r=1|d,q,\pi)$ and $P(e=1|d,\pi)$, we sample clicks with Equation~\ref{eq:clickProb}.  
The advantage of the simulation is that it allows us to do online experiments on a large scale while still being easy to reproduce by researchers without access to live ranking systems~\cite{oosterhuis2021unifying}.
For simplicity, same as existing works~\cite{oosterhuis2021unifying,yang2022CanClicks,yang2021maximizing,morik2020controlling}, we  assume  that users' examination $P(e=1|d,\pi)$ is known in experiment since  many existing works \cite{ai2018unbiased,wang2018position,agarwal2019estimating,radlinski2006minimally} have been proposed to estimate it. Due to different data sizes, we simulate 200k steps for MQ2008 and  4M  steps for  MSLR10k and Istella-S. 

\vspace{3pt}
\noindent
\textbf{\textit{Experiment Settings:}} \space
We noticed that LP and ILP methods are proposed in \textbf{the post-processing setting}, where relevance is already known or well estimated in advance. However, in most real-world settings, ranking optimization and relevance learning are carried out at the same time, which we refer to as \textbf{the online setting}. 
To give a comprehensive comparison, we 
evaluate ranking methods in both settings. 
In the post-processing setting, all the ranking methods in Section~\ref{sec:baselines} are based on true relevance $R$, and FARA will set $\beta$ as 0. 
In the online setting, all the ranking methods in Section~\ref{sec:baselines} are based on the relevance estimation $\reallywidehat{R}$ in Eq. \ref{eq:releEsti} to perform ranking optimization. FARA set $\beta$ to 1 and $E_{min}=10$ unless otherwise explicitly specified, as they work well across all our experiments.  

\begin{figure*}[t]
\centering
\begin{subfigure}[]{0.33\textwidth}
\includegraphics[scale=0.20]{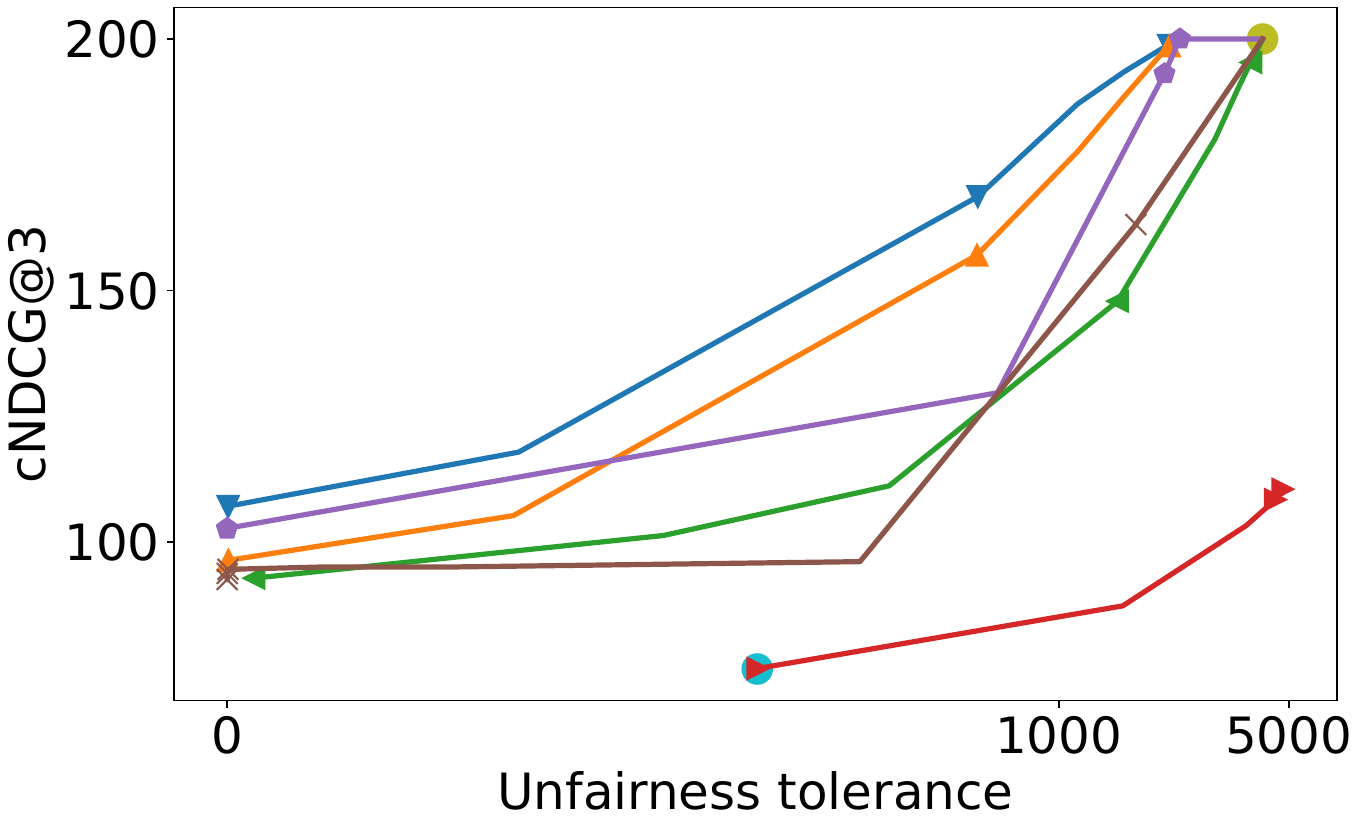}
\caption{MSLR10k, post-processing}
\label{fig:tradeoff1MSPost}
\end{subfigure}\hfill
    \begin{subfigure}[]{0.33\textwidth}
\includegraphics[scale=0.20]{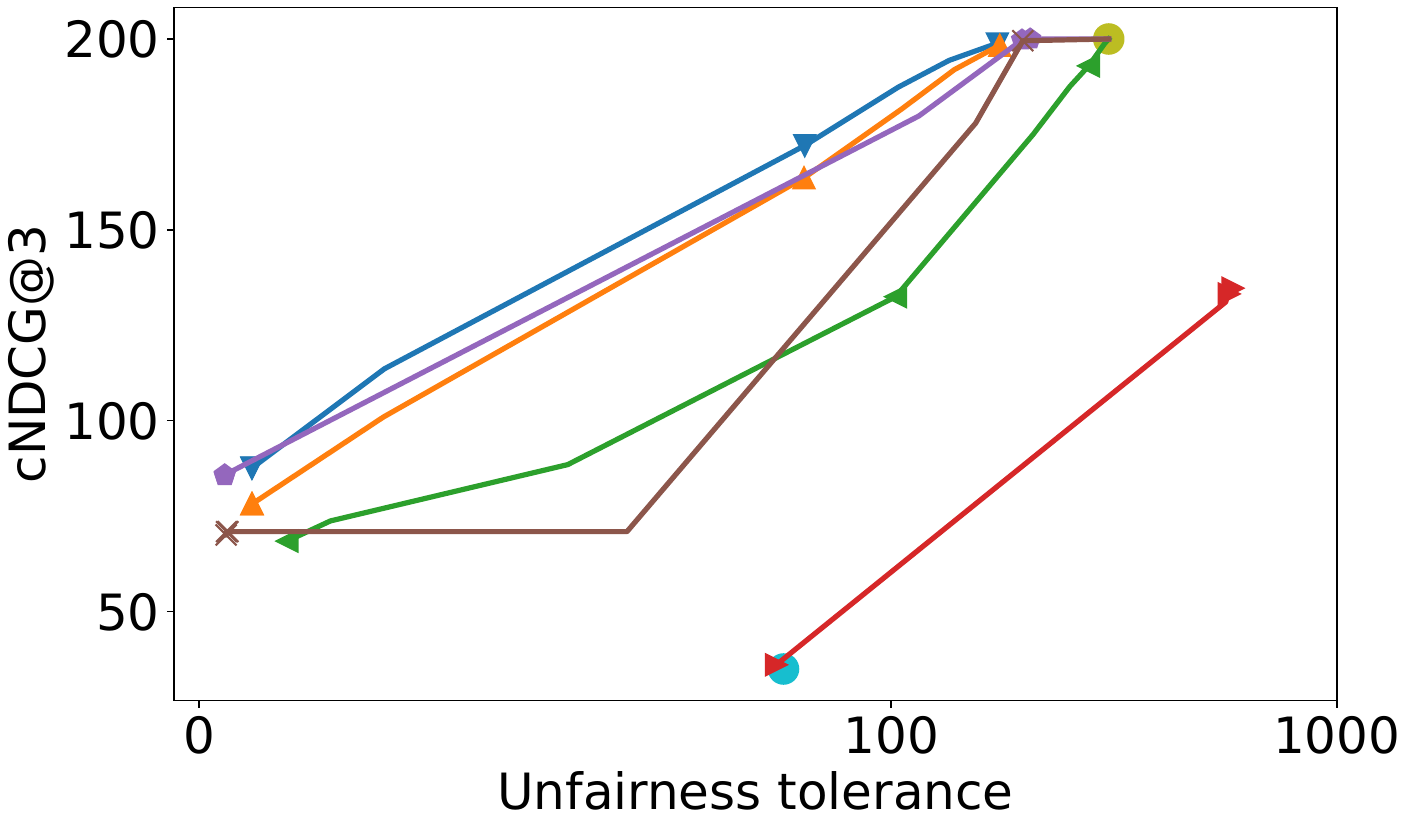}
\caption{Istella-S, post-processing}
\label{fig:tradeoff1MSPost}
\end{subfigure}\hfill
\begin{subfigure}[]{0.33\textwidth}
\includegraphics[scale=0.20]{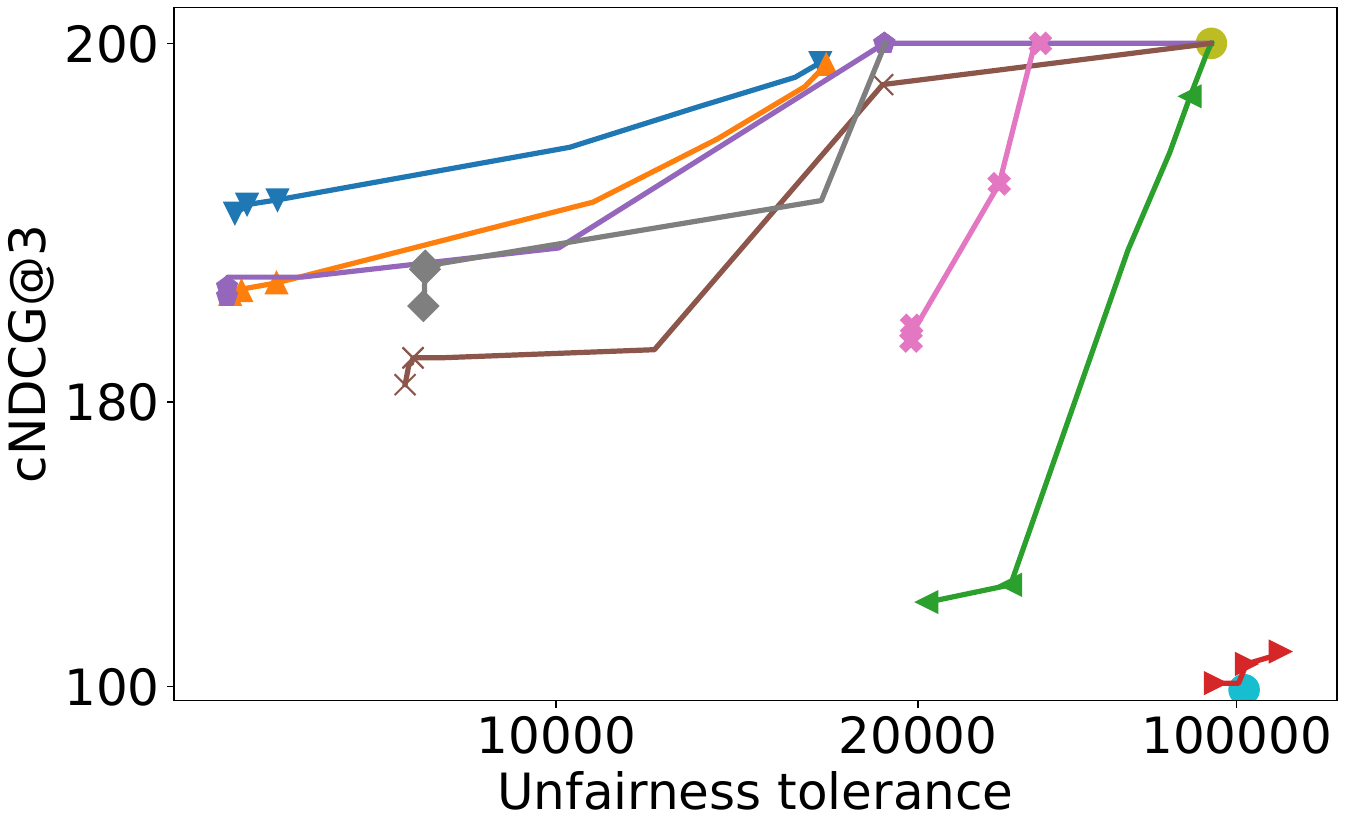}
\caption{MQ2008, post-processing}
\label{fig:tradeoff1MQPost}
\end{subfigure}


%
\begin{subfigure}[]{0.33\textwidth}
\includegraphics[scale=0.20]{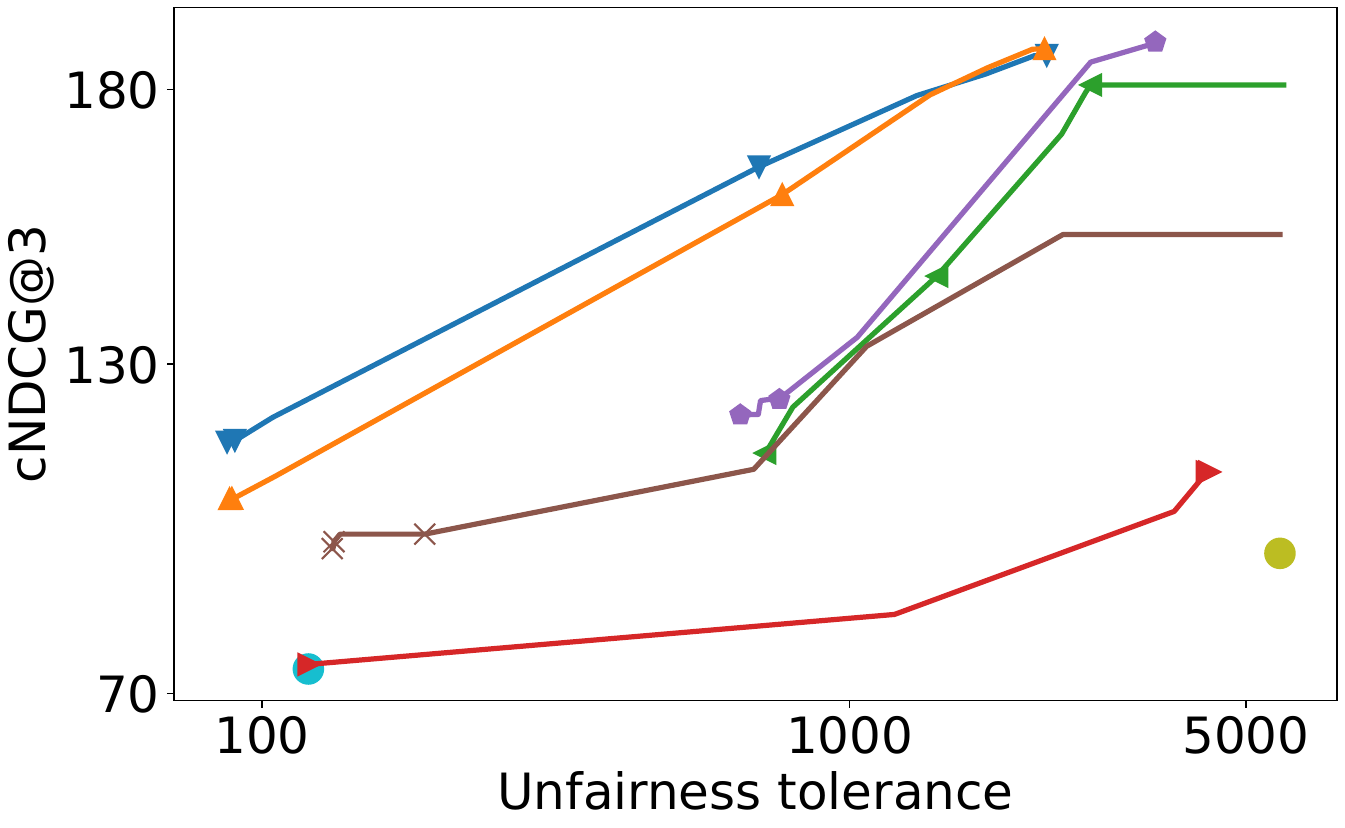}
\caption{MSLR10k, online}
\label{fig:tradeoff1MSOnline}
\end{subfigure}\hfill
\begin{subfigure}[]{0.33\textwidth}
\includegraphics[scale=0.20]{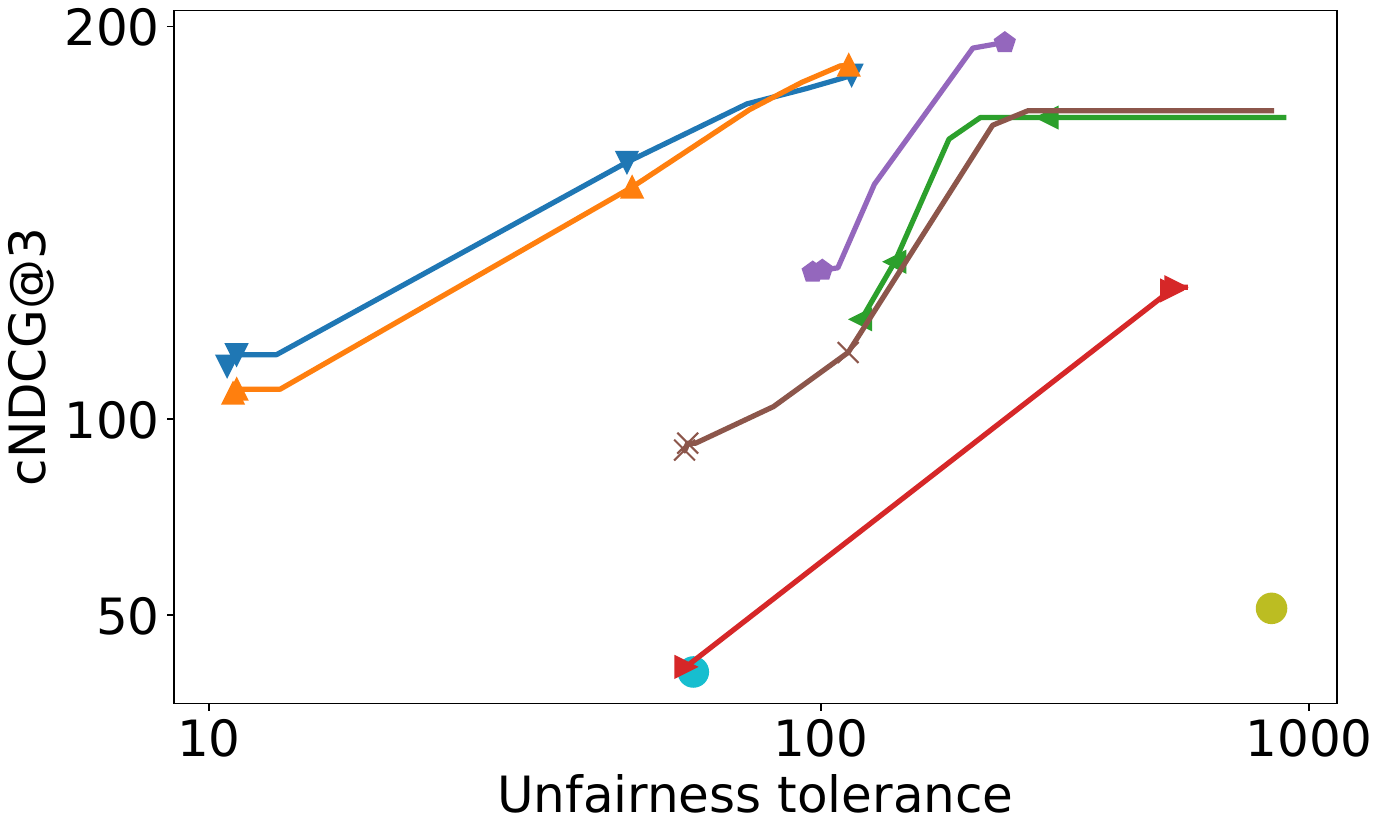}
\caption{Istella-S, online}
\label{fig:tradeoff1MSOnline}
\end{subfigure}\hfill
\begin{subfigure}[]{0.33\textwidth}
\includegraphics[scale=0.20]{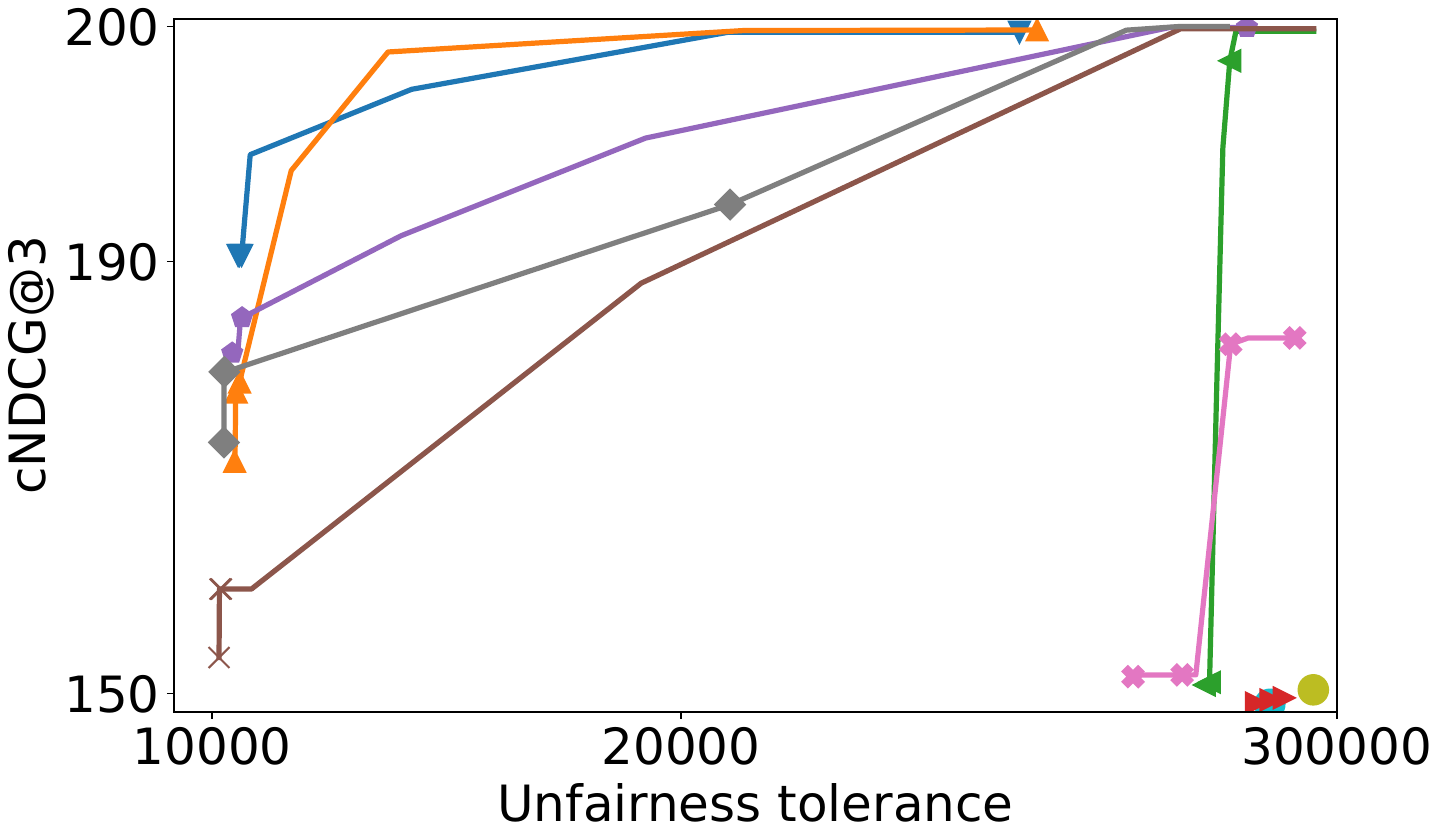}
\caption{MQ2008, online}
\label{fig:tradeoff1MQOnline}
\end{subfigure}
\begin{subfigure}[]{0.99\textwidth}
\vspace{-3pt} 
\centering
\includegraphics[scale=0.3]{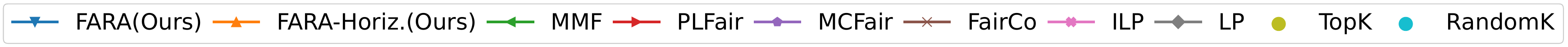}
\label{fig:legend}
\end{subfigure}
\vspace{-10pt} 
\caption{c-NDCG vs. unfairness tolerance (Eq.~\ref{eq:unfairness}) in the post-processing setting and the online setting.  Given the same unfairness, the higher curves or points lie, the better their performances are. Our methods FARA and FARA-Horiz. lie higher than  all  fair baselines in all figures. ILP
and LP are unavailable for MSLR10k and Istella-S due to time costs (refer to Table~\ref{tab:Timeperformance}).}
\label{fig:balance}
\vspace{-15pt}
\end{figure*}

\vspace{3pt}
\noindent
\textbf{\textit{Evaluation:}} \space 
We use the cum-NDCG (cNDCG) in Eq.~\ref{eq:cumuNDCG} with $\gamma=0.995$  (same $\gamma$ adopted in \cite{wang2018efficient,wang2019variance}) to evaluate the effectiveness at different cutoffs, $1\le k_c\le 5$.
Aside from effectiveness,  unfairness defined in Eq.~\ref{eq:unfairness} is used for unfairness measurement. We run each experiment five times and report the average evaluation performance on the test partition. We use the Fisher randomization test~\cite{smucker2007comparison} with $p<0.05$ to do significant tests. Due to the time cost (see Table~\ref{tab:performance}), we do not run ILP and LP on the larger datasets,  MSLR10k and Istella-S, and the performances are not available (NA).


\vspace{-5pt}
\subsection{Results and Analysis}
\label{sec:results}

\begin{table*}[t]
\centering
\caption{
Comparison of cNDCG@(1,3,5) and unfairness tolerance in the post-processing setting. Significant improvements or degradations with respect to FairCo are indicated with +/-. 
Within fair algorithms, the best performance with statistical significance is bolded and underlined. Here, $\alpha$ is set  to the maximum value (see Sec.~\ref{sec:Simulation} for $\alpha$'s range) for each fair algorithm respectively, which means that all algorithms are trying their best to optimize ranking fairness (Eq.~\ref{eq:fairness}) and the numbers in the table represents their unfairness lower bound.
Results are rounded to one decimal place.
}
\centering
\resizebox{\textwidth}{!}{
\begin{tabular}{c|cccc|cccc|cccc}\toprule
\multirow{2}{*}{Methods} & \multicolumn{4}{c|}{MSLR-10k}&\multicolumn{4}{c|}{Istella-S} &\multicolumn{4}{c}{MQ2008} \\ \cline{2-5} \cline{6-9} \cline{10-13}
&cNDCG@1&cNDCG@3&cNDCG@5& unfair. &cN@1&cN@3&cN@5& unfair.&cN@1&cN@3&cN@5& unfair. \\ \hline
TopK&200.0$^+$&200.0$^+$&200.0$^+$	&4165.0	$^-$&200.0$^+$	&200.0$^+$&200.0$^+$	&310.1$^-$	&200.0$^+$&200.0$^+$&200.0$^+$&86001.1	$^-$ \\ 
Randomk&68.0$^-$ &74.7$^-$ &79.7$^-$&119.0	$^-$
&30.2$^-$ &35.7	$^-$&41.1 $^-$&56.7	$^-$
&74.0	$^-$&95.7$^-$ &114.6$^-$ &104632.2$^-$ \\ \hline
PLFair&68.2	$^-$ &74.8	$^-$&79.9$^-$&119.6	$^-$ &31.8 $^-$&36.2$^-$ &41.5 $^-$&54.6 $^-$&79.2$^-$	&99.3$^-$ &117.2 $^-$&101245.1$^-$ \\ 
MMF&84.4&92.8 &99.9	&8.0$^-$ & 62.2&68.8	&80.1	&6.6$^-$&132.8$^-$	&162.3$^-$	&172.5$^-$	&20688.7$^-$ \\ 
ILP&NA&NA&NA&NA&NA&NA&NA&NA&185.6$^+$&183.8	&186.7	&19916.3	$^-$\\
LP&NA&NA&NA&NA&NA&NA&NA&NA&188.4$^+$		&187.4$^+$	&187.9	&9425.7 \\ 
MCFair &114.8$^+$		&102.7$^+$		&101.0	&0.0
 &113.7	$^+$	&85.25	$^+$	&81.3	&0.4
    &193.5$^+$	&186.0$^+$&186.6	&9113.7	
\\
FairCo&85.5	&93.7	&100.8	&0.0&63.3&69.9&80.4	&0.5	&179.0	&182.0	&187.4	&9382.0	\\ \hline
FARA-Horiz.(Ours)&90.7$^+$		&96.1$^+$	&100.4	&0.0	 &78.5$^+$&79.8$^+$	&82.6	&0.9	&187.3$^+$		&186.1$^+$		&187.0	&9125.9	\\ 
FARA(Ours)&\underline{\textbf{129.0}}$^+$	&\underline{\textbf{107.0}}$^+$	&99.7	&0.0	&\underline{\textbf{135.5}}$^+$	&\underline{\textbf{89.1}}$^+$&82.6	&0.9	&\underline{\textbf{196.3}}$^+$	&\underline{\textbf{190.9}}$^+$	&186.8	&9129.9	\\
\bottomrule
\end{tabular}}
\label{tab:performance}
\vspace{-5pt}
\end{table*}

In this section, we first compare the ranking relevance performance given different degrees of fairness requirements. Then we dive deep into our method to offer more insights into FARA's supremacy.
\subsubsection{Can FARA reach a better balance between fairness and effectiveness?}~
\label{sec:balancePost}
In Figure~\ref{fig:balance}, we compare ranking methods' effectiveness-fairness balance given different fairness requirements. To generate the balance curves in Figure~\ref{fig:balance}, we incrementally sample $\alpha$ from the minimum value to the maximum value within $\alpha$'s ranges indicated in Section~\ref{sec:baselines}. For each method, twenty $\alpha$  are sampled with the step size as $(\alpha_{max}-\alpha_{min})/20$. After sampling, we perform ranking simulation experiments for each $\alpha$ to get a (cNDCG, unfairness) pair. Then we connect different $\alpha$'s (cNDCG), unfairness) pair to form a curve for each method respectively in Figure~\ref{fig:balance}. All the curves start from the top right to the bottom left as  $\alpha$ increases, which means there exists a tradeoff between fairness and effectiveness (cNDCG). The reason behind this tradeoff is that requiring more fairness will bring more constraints on optimizing effectiveness.  Since TopK and RandomK  do not have trade-off parameters, both  of them only have one single pair of (cNDCG, unfairness), and their performances are shown as single points in Figure~\ref{fig:balance}. 

In Figure~\ref{fig:balance}, our methods FARA and FARA-Horiz. outperform all other fair methods since our methods reach the best cNDCG given the same unfairness tolerance. And FARA's supremacy is consistent in both post-processing and online settings. All fair ranking algorithms are effective fair ranking algorithms since they all show the tradeoff, i.e.,  higher cNDCG when increasing the unfairness tolerance. For unfair algorithms, TopK performs differently in post-processing and online settings.
In the post-processing setting, TopK reaches the highest cNDCG since relevance is known, and ranking relevance is the only consideration. However, in the online setting, TopK can not reach the highest cNDCG. We think the drop in cNDCG is that TopK naively trusts the relevance estimation without any exploration when optimizing ranking relevance. However, fair algorithms are shown to be robust to the online setting since they mostly can reach better cNDCG than Topk when increasing unfairness tolerance. We think the reason for the robustness is that fair algorithms usually rerank items for different sessions to optimize fairness, and such reranking brings explorations.
\vspace{-0.1cm}
\subsubsection{What is the fairness upper bound that FARA can reach?}\space
\label{sec:FairCapa}
In Table~\ref{tab:performance}, lower unfairness means higher fairness capacity and fairness upper bound, i.e., the maximum possible fairness one algorithm can reach. Fair effective ranking algorithms, including FairCo, LP, FARA-Horiz. and FARA, have similar fairness capacity and  outperform unfair ranking algorithms in terms of unfairness. The success of FARA-Horiz. and FARA  validates the proposed quadratic programming formulation can optimize fairness.  Similar cNDCG@5 and unfairness for those effective algorithms are expected according to Theorem~\ref{sec:fixedEffFair}.  For other fair ranking algorithms, ILP and MMF, and PLFair show inferior fairness capacity. As for the possible reason, ILP uses the integer linear programming method, which may not be effective in optimizing fairness. MMF actually follows a slightly different definition of fairness which require fairness at any cutoff should be fair, which is more strict than the definition we use in this paper. As for PLFair, PLFair tries to learn the ranking score that optimizes fairness based on the feature representation (the exact setting in original paper \cite{oosterhuis2021computationally}). However, the feature representation is initially designed for relevance which makes PLFair suboptimal for fairness optimization.    In Table~\ref{tab:performance}, ILP and LP are NA for MSLR10k and Istella-S due to time costs (refer to Table~\ref{tab:Timeperformance}).  Due to the page limit, we show the ranking performance of the online setting in Fig.~(\ref{fig:balance}), instead of in Table~\ref{tab:performance}.
\subsubsection{How is FARA's effectiveness at different cutoffs?}\space
In Table~\ref{tab:performance}, we show cNDCG at different cutoffs. Although FairCo, LP, FARA-Horiz. and FARA have similar fairness capacities, FARA significantly outperforms those fair algorithms for cNDCG$@1$ and cNDCG$@3$ on all three datasets. Compared to FARA-Horiz., shown in Table \ref{tab:performance},  FARA still significantly outperforms FARA-Horiz. at top ranks, which shows the necessity of vertical allocation. 


\subsubsection{How is FARA's time efficiency}? \space
\label{sec:effeciency}
\begin{table}[t]
\vspace{-5pt}
\centering
    \caption{The average time (seconds per 1k ranklists) cost with standard deviations in parentheses. Since ILP and LP are time-consuming on large datasets, the time costs  on MSLR-10k and Istella-S are estimated by only running 1k steps instead of the total simulation steps indicated in Sec.~\ref{sec:Simulation}.}
    \centering
  \resizebox{\columnwidth}{!}{
    \begin{tabular}{c c c c}\toprule
        \multirow{2}{*}{Algorithms} & \multicolumn{3}{c}{Datasets}\\ \cline{2-4}
       &{MSLR10k}&{Istella} &{MQ2008} \\ 
        \hline
        TopK&0.65$_{(0.14)}$
&0.50$_{(0.00)}$
&0.55$_{(0.10)}$

\\ 
        Randomk&0.63$_{(0.12)}$
&0.57$_{(0.04)}$
&0.59$_{(0.14)}$

\\ \cline{1-4}
        PLFair&2.24$_{(0.04)}$
&3.11$_{(0.07)}$
 &1.77$_{(0.04)}$

\\
        MMF&8.01$_{(0.39)}$
&6.57$_{(0.23)}$
&1.82$_{(0.28)}$

\\
ILP &1208.90$_{(85.80)}$ &1102.30$_{(75.20)}$&19.70$_{(1.29)}$
\\
        LP&$\ge$10 days&$\ge$10 days&2.09$_{(0.48)}$
 \\ 
        MCFair&0.724$_{(0.016)}$
&0.660$_{(0.025)}$
&0.567$_{(0.035)}$

\\         
        FairCo&0.73$_{(0.04)}$
&0.71$_{(0.03)}$
&0.70$_{(0.12)}$

\\ 
        FARA-Horiz.(Ours)&1.00$_{(0.17)}$
&0.86$_{(0.07)}$
&0.97$_{(0.22)}$

 \\
        FARA(Ours)&0.91$_{(0.07)}$
&0.91$_{(0.00)}$
&0.97$_{(0.30)}$

 \\
        \bottomrule
        \end{tabular} }
    \label{tab:Timeperformance}
    \vspace{-2pt}
\end{table}
Besides fairness and effectiveness optimization, we also empirically compare the time efficiency.
In Table~\ref{tab:Timeperformance}, ILP and LP are really time-consuming, especially on large datasets, MSLR10k and Istella-S. 
Compared to ILP and LP, FARA is more than $1000\times$ time efficient on MSLR10k and Istella-S, although all three of them are programming-based methods. There are two reasons behind FARA's time efficiency. The first one is that FARA has a much fewer number of decision variables since FARA only has $O(n)$ decision variables, while ILP and LP have $O(n^2)$ decision variables. The second one is that FARA does not need to solve quadratic programming for every time step. By solving quadratic programming once, we can get $\Delta T$ ranklists used for future $\Delta T$ sessions.
FARA can reach comparable time efficiency with non-programming-based algorithms like TopK, RandomK, and FairCo. Compared with those non-programming-based algorithms, the slightly additional time cost of FARA is acceptable given FARA' superior ranking performance in Tab.~\ref{tab:performance} and in Fig.~\ref{fig:balance}.


\subsubsection{How does $\Delta T$ influence FARA}?\space
\label{sec:FutureSess}
In Figure \ref{fig:perfAlongFuture}, we show the results of cNDCG and unfairness by varying the value of $\Delta T$. With greater $\Delta T$, we see a clear boost of \textit{cNDCG} for FARA, while such a boost does not happen for FARA-Horiz.  
We know that the proposed vertical allocation is the key reason to have better-ranking relevance when we get the optimal exposure planning $\Delta E^*$, and we theoretically analyze the reason in \S~\ref{sec:VerticalReason}.
Besides, as we increase $\Delta T$,  unfairness does not vary much, and its value stays close to the minimum unfairness we can achieve in Table~\ref{tab:performance}. We think the reason for the steady value of unfairness is that FARA already reaches the upper limit of  fairness when $\Delta T$ is small, and it is hard to improve when we increase $\Delta T$.
\begin{figure}[t]
    \centering
        \begin{subfigure}{0.23\textwidth}
    \includegraphics[scale=0.25]{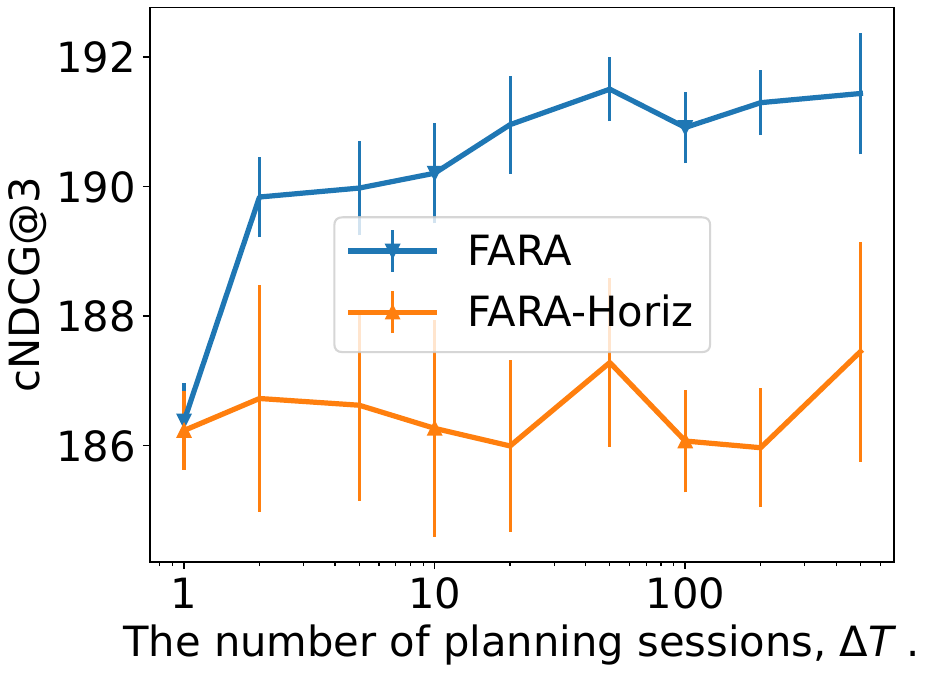}
         
    \end{subfigure} \hfill
        \begin{subfigure}{0.23\textwidth}
    \includegraphics[scale=0.25]{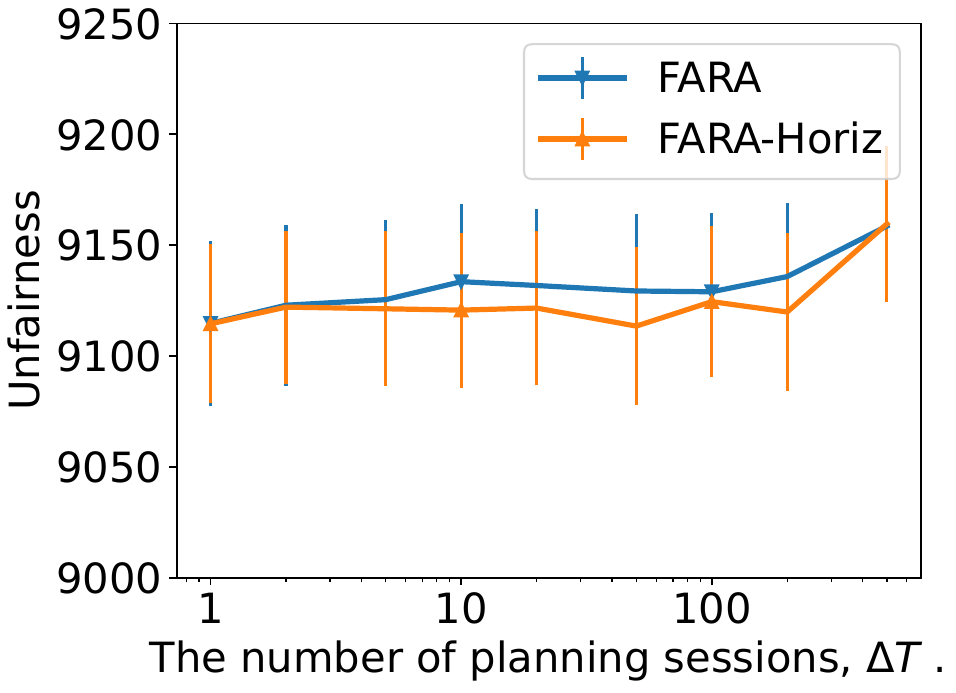}
    \end{subfigure}
    \caption{The numbers of planning session $\Delta T$'s influence on FARA in the post-processing
setting on MQ2008. $\alpha$ is set as 1. }
    \label{fig:perfAlongFuture}
\end{figure}
\begin{figure}[t]
    \centering
    \begin{subfigure}[]{0.23\textwidth}
    \includegraphics[scale=0.25]{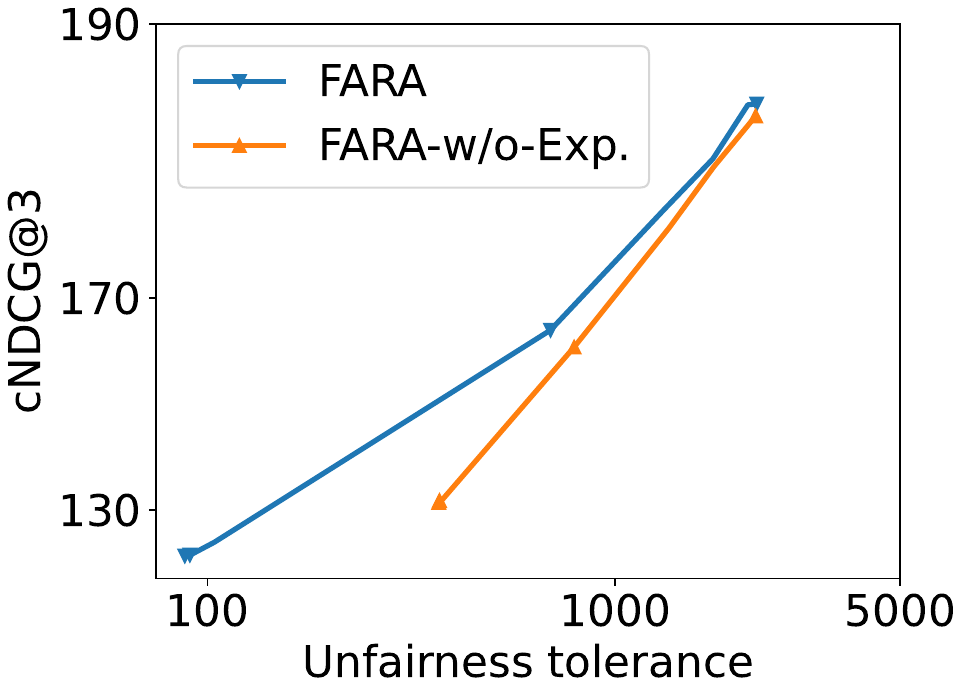}
    \caption{MSLR10k.}
    \end{subfigure}\hfill
    \begin{subfigure}[]{0.23\textwidth}
    \includegraphics[scale=0.25]{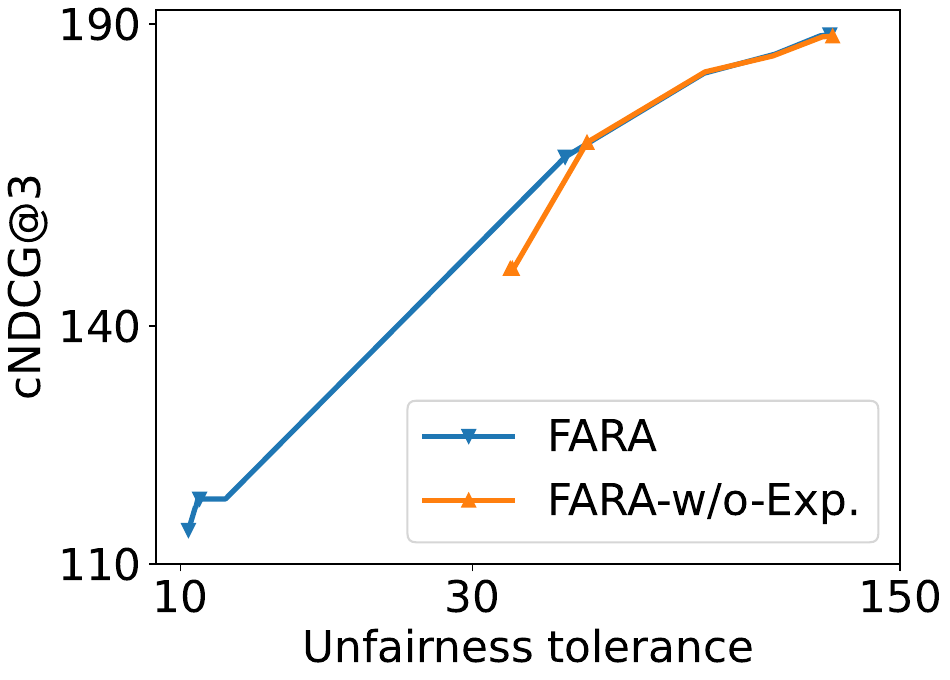}
    \caption{Istella-S}
    \end{subfigure}
    \caption{Ablation study of exploration in the online setting. The higher curves lie, the better their performances are.}
    \label{fig:abaltion}
\end{figure}
\subsubsection{How does exploration influence FARA in the online setting}?\space
To study how the exploration part (the slack variables $s$ in Eq.~\ref{eq:QPobjOn}) influences FARA, we did an ablation study for FARA  with or without exploration. Due to the page limit, we only  show the ablation results on the larger dataset, i.e., MSLR10k and Istella-S, in Figure~\ref{fig:abaltion}. The advantage of exploration is two-folded based on Figure~\ref{fig:abaltion}. Firstly, FARA lies higher than FARA-w/o-Exp. in the figure, which suggests exploration leads to a better effectiveness-fairness balance. Secondly, FARA has a smaller lower bound of unfairness tolerance,  which implies exploration enables FARA to have a higher fairness capacity and can meet a more strict fairness requirement.

\section*{Acknowledgements}
\vspace{-5pt}
This work was supported in part by NSF CCF-2115677 and in part by the School of Computing, University of Utah. Any opinions, findings, conclusions, or recommendations expressed in this material are those of the authors and do not necessarily reflect those of the sponsor.
\clearpage
\newpage
\bibliographystyle{ACM-Reference-Format}
\balance
\bibliography{mybib}
\end{document}